\newtheorem{theorem}{\sc Theorem}[section]
\newtheorem{proposition}[theorem]{\sc Proposition}
\numberwithin{equation}{section}
\theoremstyle{remark}
\newcommand{\be}{\begin{equation}}
\newcommand{\ee}{\end{equation}}
\def\bE{\mathbb{E}}
\def\bP{\mathbb{P}}
\def\t{\triangle}
\def\E{\bE}
\def\P{\bP}
\definecolor{darkgreen}{rgb}{0.0,0.5,0.0}
\definecolor{darkblue}{rgb}{0.0,0.0,0.3}
\definecolor{nicosred}{rgb}{0.65,0.1,0.1}
\definecolor{light-gray}{gray}{0.7}
\newcommand{\ER}{Erd\H{o}s-R\'enyi }
\begin{document}
\title{Insights from exact social contagion dynamics\\ on networks with higher-order structures}

\author{Istv\'an Z. Kiss}
\email{istvan.kiss@nulondon.ac.uk}
\affiliation{Network Science Institute, Northeastern University London, London E1W 1LP, United Kingdom}
\author{Iacopo Iacopini}
\affiliation{Network Science Institute, Northeastern University London, London E1W 1LP, United Kingdom}
\author{P\'eter L. Simon}
\affiliation{Numerical Analysis and Large Networks Research Group, Hungarian Academy of Sciences, Hungary}
\author{Nicos Georgiou}
\affiliation{Department of Mathematics, University of Sussex, Falmer, Brighton BN1 9QH, United Kingdom}

\date{\today} 
\begin{abstract} 
Recently there has been an increasing interest in studying dynamical processes on networks exhibiting higher-order structures, such as simplicial complexes, where the dynamics acts above and beyond dyadic interactions. Using simulations or heuristically derived epidemic spreading models it was shown that new phenomena can emerge, such as bi-stability/multistability. Here, we show that such new emerging phenomena do not require complex contact patterns, such as community structures, but naturally result from the higher-order contagion mechanisms. We show this by deriving an exact higher-order SIS model and its limiting mean-field equivalent for fully connected simplicial complexes. Going beyond previous results, we also give the global bifurcation picture for networks with 3- and 4-body interactions, with the latter allowing for two non-trivial stable endemic steady states. Differently from previous approaches, we are able to study systems featuring interactions of arbitrary order. In addition, we characterise the contributions from higher-order infections to the endemic equilibrium as perturbations of the pairwise baseline, finding that these diminish as the pairwise rate of infection increases. Our approach represents a first step towards a principled understanding of higher-order contagion processes beyond triads and opens up further directions for analytical investigations.
\end{abstract}

\maketitle	

\onecolumngrid

\section{Introduction}
Complex networks provide a powerful representation for the backbone of complex systems by describing their structural connections in terms of nodes, i.e., individuals, that interact through links~\cite{albert2002statistical, newman2003structure, barrat2008dynamical, latora_nicosia_russo_2017}. Recently there has been an increasing interest in non-pairwise approaches to networked populations~\cite{lambiotte2019networks, battiston2021physics, bick2023higher}. In fact, many real-world systems are composed by elements that interact in groups of different sizes, stretching the order of these fundamental interactions beyond the dyads. This is particularly true for social systems, where most social interactions involve more than two people at the time ---while links, by definition, can connect only pairs of individuals~\cite{wasserman1994social}. Higher-order network approaches can instead be used to explicitly encode the many-body social interactions that mediate the daily communication of human and non-human animal societies~\cite{patania2017shape, benson2018simplicial, iacopini2023not}. Hypergraphs and simplicial complexes, differently from networks, allow for interactions between any number of units, and can therefore provide a more accurate ``higher-order'' description for those systems featuring group interactions~\cite{battiston2020networks, torres2021and}. 

Modelling efforts in this directions have already shown that landmark dynamical processes on networks can behave very differently when we let nodes interact in groups composed by more than two individuals at the time~\cite{skardal2019abrupt, millan2020explosive, neuhauser2020multibody, alvarez2021evolutionary, schawe2022higher}.
Among these, extensions of adoption processes beyond pairwise interactions led to the appearance of new phenomenologies, such as critical mass effects and bi-stability in social contagion and norm emergence, whose presence depends on both dynamical and structural quantities~\cite{barrat2022social}.
For example, groups can amplify small initial opinion biases accelerating consensus formation in voter models~\cite{papanikolaou2022consensus}. In higher-order spreading processes it has been shown that the inclusion of peer pressure coming from group interactions can change the nature of the phase transition from an epidemic-free to an endemic state, allowing for their co-existence~\cite{iacopini2019simplicial, matamalas2020abrupt, de2020social}. Structural features, as heterogeneity, can suppress the onset of bi-stability ~\cite{landry2020effect, st2022influential}, or even lead to multi-stability and intermittency when the system presents community structure and groups follow a critical-mass dynamics~\cite{ferraz2023multistability}. Group size plays also an important role. A disease spreading through higher-order mechanisms tends to be concentrated and sustained by the largest groups~\cite{st2021master}. By contrast, when groups modulate the diffusion of social conventions starting from a seed of committed agents, a non-monotonic dependence on the group size has been found~\cite{iacopini2022group}. 

Despite the interesting insights obtained via the inclusion of these higher-order mechanisms, rigorous theoretical studies in this directions have been very limited, and most analytical treatments either rely on heuristic methods for their derivation or on approximations that are true only under certain conditions; this is the case for homogeneous and heterogeneous mean-field models~\cite{iacopini2019simplicial, jhun2019simplicial, landry2020effect}, pair-based and triadic approximations~\cite{malizia2023pair, burgio2023triadic}, microscopic Markov-chain approaches~\cite{matamalas2020abrupt, burgio2021network}, and approximate master equations~\cite{st2021master, st2022influential, burgio2023adaptive}.

Here, we devise a principled analytical formulation of the simplicial contagion model~\cite{iacopini2019simplicial} on fully connected higher-order structures and investigate its emerging dynamics. While most previous studies focused on the dynamics mediated by 2-body (1-simplices) and 3-body (2-simplices), we study the model in details up to 4-body interactions, i.e., when the infection spreads in complete 3-complexes. We then show, starting from the resulting Kolmogorov forward equations, how one can derive mean-field models for complete simplicial 2- and 3-complexes, and we prove their exactness for a number of nodes that tends to infinity. Going even further, we conjecturing a mean-field equation for simplicial contagion on a complete structure up to an arbitrary order $M$, a complete simplicial $M$-complex.
In addition, we show how one can map the additional infection pressure brought by higher-order interactions as perturbations of the classic system where infections are simply mediated by traditional pairwise links (1-simplices). 
Finally, through an extensive bifurcation analysis of the contagion dynamics on a system up to 4-body interactions, we show that, in addition to the bi-stability already found when adding 3-body contributions to a spreading dynamics, the phenomenology at higher-orders is further enriched and multi-stability can emerge ---where more than one non-trivial stationary state can co-exist. 

The paper is structured as follows.
In section~\ref{sec:model}, we formulate the exact Susceptible-Infected-Susceptible (SIS) contagion dynamics on fully connected structures. We formulate the Kolmogorov forward equations for a simplicial contagion on a complete simplicial 2- and 3-complex, derive the associate mean-field models, and prove that they are exact in the termodinamic limit.
We close the section with a conjecture on the most general case of a complete simplicial $M$-complex.
In section~\ref{sec:bif}, we present a full bifurcation analysis of the resulting mean-field models for complete simplicial 2- and 3-complexes. For the case up to 2-complexes, we study the full phase diagram and the stability of the solutions as a function of the infection parameters. The system presents two distinct bifurcation scenarios, either a traditional transcritical bifurcation, or a fold bifurcation leading to bi-stability. We then repeat the analysis for contagion dynamics running on complete simplicial 3-complexes, finding a much richer scenario where the system can display the transcritical behaviour, bi-stability, and fold bifurcation with the fold placed both before or after the transcritical point (multistability).
Finally, in section~\ref{sec:disc} we conclude by interpreting our findings and give a brief overview of the possible next challenges.

\section{From exact stochastic to limiting mean-field models}
\label{sec:model}

In this section we lay down the basic formulation for an exact simplicial SIS-like dynamics on a fully connected simplicial complex composed by $N$ interconnected nodes.

We represent the individuals of a social system as a simplicial complex, that is a collection of $k$-simplices~\cite{salnikov2018simplicial, battiston2020networks, torres2021and}. Each $k$-simplex (where $k$ denotes the order) represents a group interaction among $k+1$ nodes (size $k+1$). Under this framework, nodes are called 0-simplices, pairwise links are 1-simplices, etc. In addition, in order for a simplicial complex to be valid it requires downward closure, that is that all the sub-sets of it simplices need to be also part of the complex~\cite{aleksandrov1998combinatorial, hatcher2002algebraic}. For example, if the complex contains the 2-simplex $[i, j, k]$, it must also contain the lower-order combinations $[i, j]$, $[j, k]$, $[i, k]$, $[i]$, $[j]$ and $[k]$. In the social context, it means that whenever a group of individuals is having, for example, a conversation, all the possible sub-groups contained are also assumed to be interacting. 

We then consider a contagion dynamics that follows the simplicial contagion model introduced in Ref.~\cite{iacopini2019simplicial}, according to which an infection can spread in a population at different rates through group interactions of different sizes. More precisely, in addition to the traditional infection along links, or pairwise contagion, we allow for contagion events through simplices (groups of nodes) of arbitrary order ---as allowed by the size of the complex. Nodes can belong, as per a traditional SIS model, to two compartments: susceptible nodes ($S$) that can acquire an infection upon a contact with an infectious node; infectious nodes ($I$) that can pass the infectious to susceptible ones in the neighbourhood. While spontaneous recovery transitions from $I$ to $S$ are not affected by the higher-order structure, contagion events controlling the transitions from $S$ to $I$ can happen at different orders. In particular, a susceptible node receives a stimulus from every ``active'', or ``contagious'' simplex it is part of. A simplex is considered to be infectious if all the nodes composing it, except a single susceptible one, are infectious as well. This is clear in the examples shown in Fig.~\ref{fig:diagram}, where the higher-order contagion dynamics is explained for the case of a $S$ node $i$ belonging to a 3-simplex under three different scenarios of (rows). Given the nested structure imposed by the simoplicial complex, the maximal infection pressure is reached when all the other nodes in the simplex are infectious. 

We now derive the exact forward Kolmogorov equations for the case of a simplicial contagion running exclusively via pairwise (1-simplices) and three-body interactions (2-simplices), and we show how a mean-field limit can be derived starting from them. Thereafter, we rigorously show that this model is exact in the limit of $N\rightarrow \infty$.
Using the same approach, we then extend the derived SIS model up to four-body interactions (3-simplices), deriving again the associated mean-field formalism (the proof of its exactness is not given as it naturally follows the same idea as for the previous case up to 2-simplices). Finally, we conjecture the mean-field limit for the most general formulation of the model, that is when infection is mediated by higher-order interactions over arbitrary $k$-simplicies, $k$ being bounded only be network size, i.e. $1 \le k \le N-1$. Table~\ref{tab:table_parameters} provides a recap of the notation that will be used from now on to denote infection rates associated to simplices of different order.

\begin{figure}[t]
     \centering
     \includegraphics[width=\linewidth, scale=0.33]{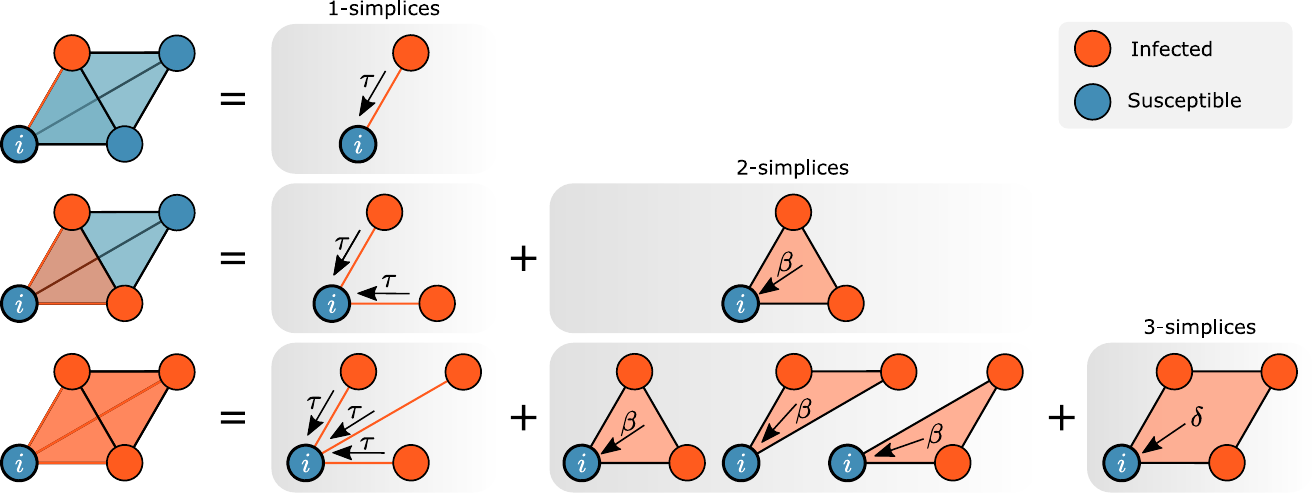}
     \caption{Schematic illustration of the different possible channels of infection of the simplicial contagion model. The susceptible (S) node $i$ participates to the depicted 3-simplex (left column); being part of a simplicial complex, by definition all the sub-faces of the simplex are also present. The infection pressure that $i$ receives grows with the number of infectious nodes (I) within the considered simplex.
     Top row: one node is infectious, thus $i$ can be infected by a single 1-simplex (link) with $\tau>0$.
     Middle row: two nodes are infectious, thus $i$ can be infected either by the two 2-simplices or by the ``infectious'' 2-simplex (triangle) with $\beta>0$.
     Bottom row: all three nodes are infectious, thus $i$ can be infected either by the three 1-simplices, by the three 2-simplices that they compose, or by the 3-simplex (tetrahedron) with $\delta>0$.}
     \label{fig:diagram}
\end{figure}

\begin{table}[b]
\center
\begin{tabular}{|c|c|c|c|}
          \hline
          & group size & infectivity & rescaled infectivity  \\
          \hline
1-simplex & 2    & $r_1=\tau$      & $s_1=\lambda$             \\
2-simplex & 3    & $r_2=\beta$     & $s_2=\mu$                 \\
3-simplex & 4    & $r_3=\delta$    & $s_3=\theta$              \\
\vdots &   \vdots  &   \vdots  &     \vdots          \\
$k$-simplex & $k+1$    &  $r_k$   & $s_k$              \\
          \hline
\end{tabular}
\caption{
Parameters associated to the infection coming from simplices of different order.}
\label{tab:table_parameters}
\end{table}

\subsection{Higher-order $SIS$ epidemics on a complete simplicial 2-complex}

Let us start from the easiest case of simplicial contagion where there are only two possible mechanisms of transmission: via pairs (1-simplices) and via ``triangles'' (2-complexes). Given the simplicial SIS model defined above, let us now focus on the possible transitions between the two classes of individuals.
If a node is infected (i.e.\ it is at state $I$), then it switches to state $S$ at a recovery rate $\gamma > 0$. When the number of infected nodes in the simplicial complex is $k$, one of them will switch to $S$ with a rate 
\be
c_k = \gamma k.
\ee

If a node is in state $S$, then it switches to state $I$ at a rate proportional to the number of infected nodes $I$ it is linked to---via simplices of different order; in particular, if the $S$ node is linked to $k$ infected ones via 1-simplices, then it switches to $I$ at a rate $\tau k $, $\tau > 0$. In this case, the number of infected nodes goes up by 1 with a rate of 
\be\label{eq:ak}
a_k = \tau k (N -k). 
\ee

In addition, three-body interactions provide extra infection pressure to susceptible individual $S$ (peer pressure). Since the simplicial complex is complete, any distinct pair of infected neighbours of the selected node would act as a an infectious 2-simplex that could also infect the susceptible node. Assuming that there are $k$ infected neighbours, there is an infection event for that node at an extra rate 
\be
b^\t_k = \beta {k \choose 2}, \quad \beta>0.
\ee
Since we are on the fully connected structure, the $k$ infected nodes are neighbours to all the susceptible ones. As such, the number of $I$ goes up due to the effect of the 2-simplices with a rate 
\be\label{eq:bk}
\beta^\t_k =  \beta {k \choose 2}(N - k), \quad \beta > 0.
\ee

After calling $I_t^{N}$ the number of infected nodes at time $t$, let us define
\be
p_k(t) = \P\{ I^N_t =k \}, 0 \le k \le N.
\ee

From the symmetry of the structure, due to its full connectivity, the epidemics (i.e.\ the number of infectious nodes at any given $t$) can be mapped to a branching process whose Kolmogorov equations are given by 

\be \label{eq:kolm-kn} 
\frac{d}{dt} p_k(t) =\begin{cases} 
(a_{k-1}+ \beta^\t_{k-1}) p_{k-1}(t) - (a_k + c_k + \beta^\t_k) p_{k}(t) +   c_{k+1}p_{k+1}(t), &\quad  3 \le k \le N-1,\\
a_{k-1} p_{k-1}(t) - (a_k + c_k + \beta^\t_k) p_{k}(t) +   c_{k+1}p_{k+1}(t), &\quad  k = 2,\\
 - (a_k + c_k) p_{k}(t) +   c_{k+1}p_{k+1}(t), &\quad  k = 1,\\
-c_k p_{k}(t) +   c_{k+1}p_{k+1}(t), &\quad  k = 0,\\
(a_{k-1}+ \beta^\t_{k-1}) p_{k-1}(t) - (a_k + c_k) p_{k}(t), &\quad  k = N.
\end{cases}
\ee 

Note that the Eq.~\eqref{eq:kolm-kn} can be reduced to 
\be \label{eq:kolm-kn2}
\frac{d}{dt} p_k(t) = (a_{k-1}+ \beta^\t_{k-1}) p_{k-1}(t) - (a_k + c_k + \beta^\t_k) p_{k}(t) +   c_{k+1}p_{k+1}(t), \quad 0 \le k \le N
\ee
subject to the boundary conditions 
\be
a_0 = a_{-1} = 0, \quad \beta^\t_0 = \beta^\t_1 =  \beta^\t_{-1}=  0, \quad c_{N+1} = 0.
\ee

The expected number of infected nodes $m^N(t) = \E(I^N_t) = \sum_{k=0}^{N}kp_k(t)$ at time $t$, after an index shift shown in Appendix~\ref{App:2-simp-proof}, satisfies the equation
\be \label{eq:prelimit}
\frac{d}{dt} m^N(t) =  \sum_{k=0}^N k [a_{k-1}p_{k-1}(t) - (a_k + c_k)p_k(t) + c_{k+1}p_{k+1}(t)] + \beta \sum_{\ell = 2}^{N-1} (N -\ell) {\ell \choose 2} p_{\ell}(t).
\ee

We can always find a stochastic process $\eta^N_t$ so that 
\be
\label{eq:rep1}
\frac{I^N_t}{N} = \frac{1}{N} m^N(t) + \eta^N_t.
\ee 
Equation \eqref{eq:rep1} can be rearranged as 
\be
\frac{I^N_t-m^N(t)}{N}= \eta^N_t,
\ee
which defines $\eta^N_t$ as the error between the density of infected individuals and their mean. If one assumes the existence of a hydrodynamic limit for $I^N_t/N$ so that it is concentrated around its mean, it would imply that we can take the limit as $N \to \infty$ in Eq.~\eqref{eq:rep1}, giving that, in the context of weak convergence, the randomness vanishes  in the limit, i.e. 
\be
\lim_{ N \to \infty} \frac{I^N_t}{N} = \lim_{ N \to \infty} \frac{1}{N} m^N(t)=m_{I}(t), \qquad  \lim_{N \to \infty} \eta^N_t = 0.
\ee
For $\eta^N_t$ to converge unscaled to $0$, it must be that Var$(\eta^N_t) \to 0$ as $N \to \infty$, so that it can converge to a constant. Since $\eta^N_t$ is uniformly bounded, the weak convergence to 0 implies the convergence of expectations and can be upgraded to an $\mathcal L^3$ convergence, so
\be\label{eq:ass2}
 \lim_{N \to \infty} \E( \eta^N_t) = \lim_{N \to \infty} \E((\eta^N_t)^p) = 0.
\ee 

Moreover, the limiting $m_I(t)$ will come out as the scaling limit of \eqref{eq:prelimit} and it is a density driven process. The coefficients in that equation also need to scale with $N$, otherwise the large $N$-limit will be trivially 0 or $\infty$. The correct scalings for $\tau$ and $\beta$ are already hidden in Equations~\eqref{eq:ak} and \eqref{eq:bk}; they need to scale proportionally to the number of simplices they correspond to. The infectivity parameter $\tau$ acts on links, and the simplicial complex contains $O(N^2)$ of them; similarly, $\beta$ acts on 2-simplices, and these are $O(N^3)$. To this end, we introduce two positive parameters $\lambda$ and $\mu$ so that 
\be\label{lambda-mu}
\tau = \lambda N^{-1}, \qquad \beta = \mu N^{-2}.
\ee
The extra $N$ power needed to balance $\tau$ and $\beta$ with the number of corresponding simplices is coming from the division with $N$ in Equation~\eqref{eq:rep1}. 

\begin{theorem}[Hydrodynamic limit for 2-simplices infections in fully connected simplicial 2-complexes] 
\label{thm:hydroKn}
Fix a time horizon $T>0$ and assume that the following uniform pointwise limit 
\be \label{eq:unipo}
 \lim_{ N \to \infty} \sup_{t \le T} \Big\|\frac{1}{N} m^N(t) - m_I(t)\Big\|_{\infty} = 0
\ee
exists as a deterministic function on $[0,1]$. Furthermore, assume the $\eta^N_t$, defined by \eqref{eq:rep1}, satisfies 
\be \label{eq:ass0}
\lim_{N \to \infty} \sup_{t \le T} \E(\eta^N_t) = 0.
\ee
Finally, let $\gamma$ denote the recovery rate and assume \eqref{lambda-mu}.

Then we have the weak convergence 
\be \label{eq:lim0}
\lim_{ N \to \infty} \frac{1}{N} I^N_t = m_I(t),
\ee
and $m_I(t)$ solves the ODE 
\be \label{eq:ODE1}
\frac{d}{dt}m_I(t) = (\lambda - \gamma) m_I(t) + \Big( \frac{\mu}{2} - \lambda\Big) m_I(t)^2 -\frac{\mu}{2} m_I(t)^3. 
\ee
\end{theorem}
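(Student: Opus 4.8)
The plan is to stay at the level of the first moment, exploit the birth--death structure of $I^N_t$, and then rescale and pass to the limit using the concentration hypotheses. First I would note that $I^N_t$ is a one-dimensional jump process with infection (birth) rate $a_k+\beta^\t_k$ and recovery (death) rate $c_k$ in state $k$; applying its generator to the identity function $f(k)=k$ and taking expectations gives the closed first-moment equation
\be
\frac{d}{dt}m^N(t)=\E\big[a_{I^N_t}+\beta^\t_{I^N_t}-c_{I^N_t}\big]=\E\Big[\tau I^N_t(N-I^N_t)+\beta\binom{I^N_t}{2}(N-I^N_t)-\gamma I^N_t\Big],
\ee
which is precisely \eqref{eq:prelimit} rewritten without the index shift. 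Next I would divide by $N$, insert the scalings $\tau=\lambda N^{-1}$ and $\beta=\mu N^{-2}$ from \eqref{lambda-mu}, and work in the density variable $X^N_t:=I^N_t/N\in[0,1]$. Using $\binom{k}{2}=\tfrac12 k(k-1)$, a direct computation yields
\be
\frac1N\frac{d}{dt}m^N(t)=\E\Big[\lambda X^N_t(1-X^N_t)+\tfrac{\mu}{2}\big((X^N_t)^2-\tfrac1N X^N_t\big)(1-X^N_t)-\gamma X^N_t\Big],
\ee
where the spurious term $-\tfrac1N X^N_t$ is an $O(1/N)$ correction that vanishes uniformly since $0\le X^N_t\le1$. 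The leading part is $\E[G(X^N_t)]$ with the cubic polynomial $G(x)=(\lambda-\gamma)x+(\tfrac{\mu}{2}-\lambda)x^2-\tfrac{\mu}{2}x^3$, already the right-hand side of \eqref{eq:ODE1}.

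The crux is to commute the expectation with the nonlinearity, i.e. to replace $\E[G(X^N_t)]$ by $G(m_I(t))$. Writing $X^N_t=\tfrac1N m^N(t)+\eta^N_t$ as in \eqref{eq:rep1}, with $\eta^N_t$ automatically centred, I would expand the monomials: $\E[(X^N_t)^2]=(\tfrac1N m^N)^2+\E[(\eta^N_t)^2]$ and $\E[(X^N_t)^3]=(\tfrac1N m^N)^3+3\,\tfrac1N m^N\,\E[(\eta^N_t)^2]+\E[(\eta^N_t)^3]$. By \eqref{eq:unipo} the deterministic part $\tfrac1N m^N(t)\to m_I(t)$ uniformly, while the $\mathcal L^3$ concentration \eqref{eq:ass2} forces the centred moments $\E[(\eta^N_t)^2]$ and $\E[(\eta^N_t)^3]$ to zero; hence $\E[(X^N_t)^j]\to m_I(t)^j$ for $j=1,2,3$ and $\E[G(X^N_t)]\to G(m_I(t))$, uniformly on $[0,T]$. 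I expect this to be the main obstacle: the cubic term in $G$ demands control of the third centred moment, which is exactly why the weak convergence of $\eta^N_t$ must be upgraded to $\mathcal L^3$ in \eqref{eq:ass2}.

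Rather than differentiating the limit, I would integrate the rescaled moment equation in time and pass to the limit under the integral by dominated convergence (all integrands are uniformly bounded on $[0,1]$ with uniformly controlled rates), obtaining the integral equation
\be
m_I(t)=m_I(0)+\int_0^t G(m_I(s))\,ds,
\ee
so that $m_I$ is $C^1$ and solves \eqref{eq:ODE1}. Finally, the weak convergence \eqref{eq:lim0} follows from the decomposition $X^N_t=\tfrac1N m^N(t)+\eta^N_t$: the deterministic part converges to $m_I(t)$ and $\eta^N_t\to0$ in probability by Chebyshev's inequality (using $\E[(\eta^N_t)^2]\to0$). Since $G$ is Lipschitz on the compact set $[0,1]$, the limiting ODE has a unique solution, which pins down the limit unambiguously.
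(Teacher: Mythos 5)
Your proposal is correct and follows the same overall strategy as the paper's proof: obtain the evolution of the first moment, rescale using \eqref{lambda-mu}, insert the decomposition \eqref{eq:rep1}, and use concentration of $\eta^N_t$ to replace $\E\big[G(I^N_t/N)\big]$ by $G(m_I(t))$ up to errors that vanish uniformly on $[0,T]$, where $G$ is the cubic on the right-hand side of \eqref{eq:ODE1}. Three execution choices differ and are worth recording. First, you derive the moment equation by applying the birth--death generator to $f(k)=k$, whereas the paper performs index shifts directly on the Kolmogorov equations \eqref{eq:kolm-kn2}; these are the same computation in two guises. Second, and more substantively, you close the argument by integrating in time and passing to the limit under the integral, arriving at $m_I(t)=m_I(0)+\int_0^t G(m_I(s))\,ds$, while the paper keeps the equation in differential form and invokes the theorem on uniform convergence of derivatives (Rudin, Theorem 7.17) to justify exchanging $\lim_{N\to\infty}$ with $\tfrac{d}{dt}$ in \eqref{eq:finale}. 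Your integral route is slightly more robust: it needs only dominated convergence together with the uniform-in-$t$ error bounds (which both arguments require anyway), and it delivers the $C^1$ regularity of $m_I$ as a by-product rather than assuming differentiability of the limit. Third, your observation that $\eta^N_t$ is automatically centred, $\E(\eta^N_t)=\E(I^N_t)/N - m^N(t)/N = 0$, is sharper than anything stated in the paper: it shows that hypothesis \eqref{eq:ass0} is vacuous as literally written, and that the operative input is the second- and third-moment control \eqref{eq:ass2}, which the cubic term genuinely requires. The paper instead bounds $\E((\eta^N_t)^p)\le \E(\eta^N_t)$ using $\eta^N_t\le 1$, a step that is only meaningful if $\E(\eta^N_t)$ is read as $\E\vert\eta^N_t\vert$, since $\eta^N_t$ takes both signs. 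So you and the paper ultimately lean on the same concentration assumption, but your proposal locates more precisely where it enters the argument.
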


The detailed proof of Theorem~\ref{thm:hydroKn} is given in Appendix~\ref{App:2-simp-proof} where the mathematical reason for the $\beta, \tau$ scalings becomes apparent.
In Figure~\ref{fig:two_close_curves} (left) we numerically show that the agreement between the Kolmogorov equations and their mean-field limit is excellent. Furthermore, it is evident that the added peer pressure brought by the 2-simplices elevates the endemic equilibrium, but the same increase in the value of the transmission rate across 2-simplices as for the pairwise rate of transmission leads to a less marked effect on the values of the endemic equilibrium. This suggests that pairwise transmission remains the main driver of the epidemics, especially for high values of the 1-simplices infectivity.

\begin{figure}[t]
     \centering
     \includegraphics[width=0.49\linewidth]{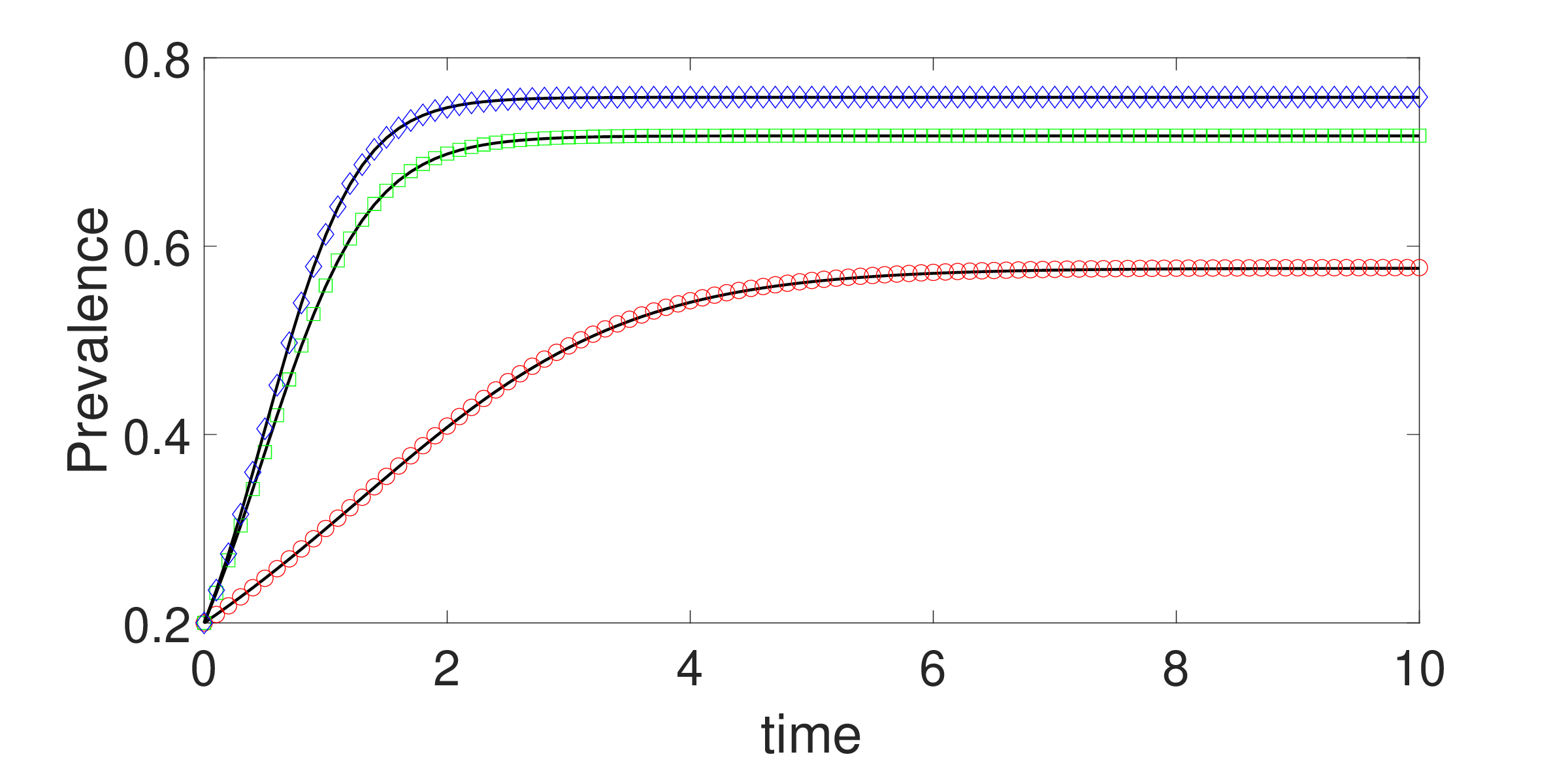}
     \includegraphics[width=0.49\linewidth]{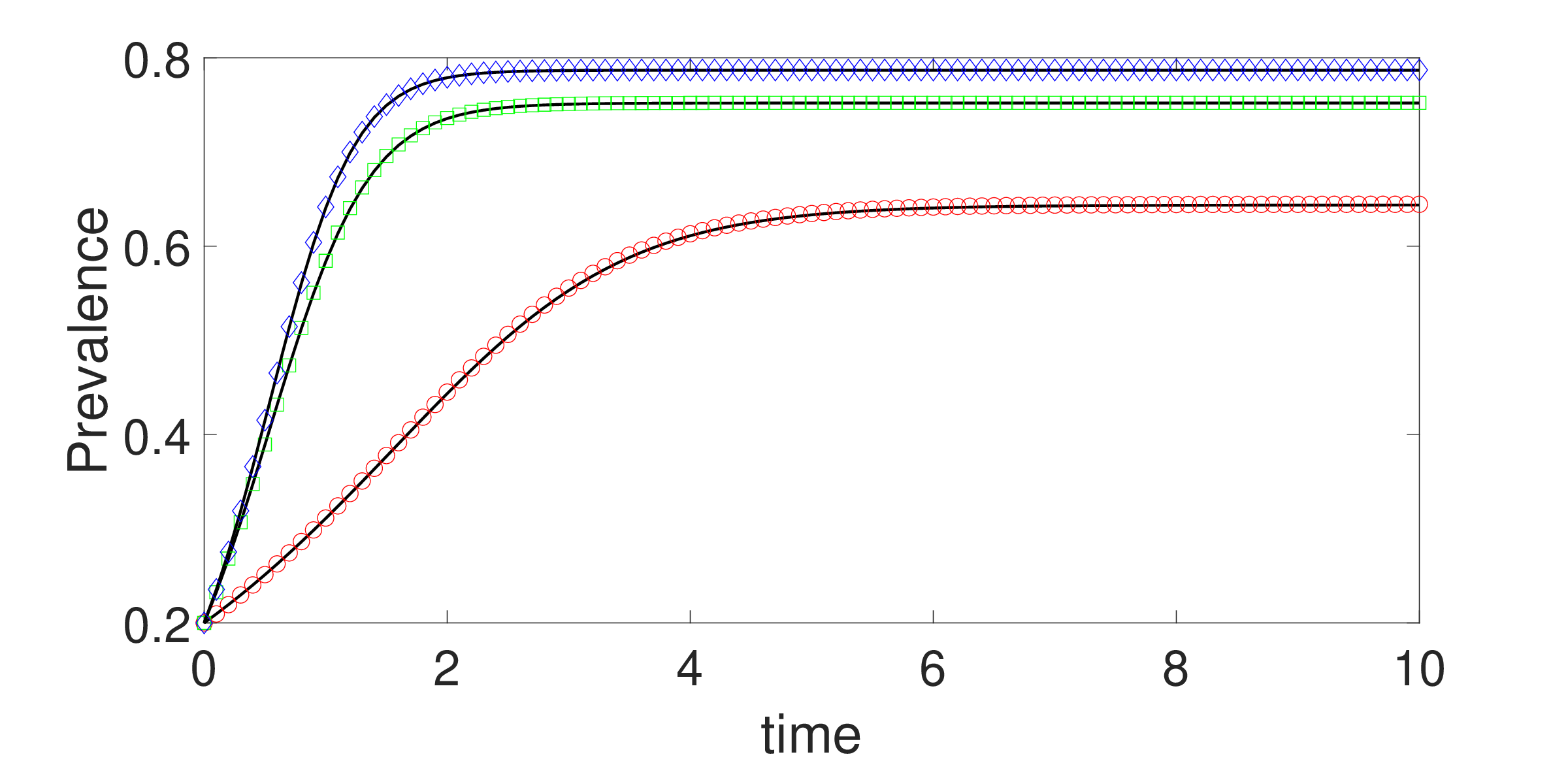}
     \caption{Comparison between the expected fraction of infected nodes based on solving the forward Kolmogorov equations (continuous lines) and the solution of the mean-field equation (markers). The case of 1- and 2-simplices (Kolmogorov equations~\eqref{eq:kolm-kn2} and mean-field equation~\eqref{eq:ODE1})  and 1-, 2- and 3-simplices (Kolmogorov equations~\eqref{eq:kolm-kn3} and mean-field equation~\eqref{ode3simplex}) are shown in the left and right panel, respectively. The following parameter combinations: leftmost panel ($\lambda, \mu$)=($1.5, 3$), ($3, 1.5$), ($3,3$) corresponding to increasing level of prevalence. The same values apply for the rightmost panel, with a fixed $\theta=5$ for all three cases. For this numerical tests we considered $N=2500$ nodes, and the epidemic is started with a seed 500 infected nodes.}
     \label{fig:two_close_curves}
\end{figure}

\subsection{Higher-order $SIS$ epidemics on a complete simplicial 3-complex}

We now follow the steps of the proof of Theorem \ref{thm:hydroKn}, but extending the previous case study up of one order. This means that the simplicial complex includes also 4-body interactions (3-simplices) that also act as possible channels of infection for susceptible nodes. The proof goes along the same lines, but it also highlights how the scaling of the rates needs to inversely scale with the number of simplices composing the complex. 
We want to keep the infection pressure from 1- and 2-simplices from before (though it can be set to 0 if necessary) and also add an extra pressure $\delta$ due to transmission mediated by 3-simplices $I-I-I-S$.

As before, if a node is infected (i.e.\ it is at state $I$), then it switches to state $S$ at a recovery rate $\gamma > 0$. 
When the number of infected is $k$, one of them will switch to $S$ with a rate 
\be
c_k = \gamma k.
\ee
If a node is at state $S$ then it switches to state $I$ at rate proportional to the number of infected nodes $I$ it is linked to; in particular if $S$ is linked to $k$ infected, then it switches to $I$ at a rate $\tau k $, $\tau > 0$. Then the number of infected goes up by 1 with  a rate of 
\be
a_k = \tau k (N -k). 
\ee
The 2-simplex pressure still remains at 
\be
\beta^\t_k =  \beta {k \choose 2}(N - k), \quad \beta > 0, k\ge 2.
\ee
Finally, for the infection pressure coming from the 3-simplices there are precisely $k \choose 3 $ ways to select 3 infected nodes and $N-k$ ways to select a susceptible node, so, after introducing the new parameter $\delta$ for the 3-simplices, we define
\be
\delta^{\square}_k = \delta {k\choose 3}(N-k), \quad \delta>0, k\ge 3.
\ee
Note that this pressure starts playing a role when $k \ge 3$, otherwise it is zero. Then the Kolmogorov equations are readily available, and we can compactly write them as 

\be
\frac{d}{dt}p_k(t) = (\alpha_{k-1} + \beta^{\triangle}_{k-1} +\delta^{\square}_{k-1})p_{k-1}(t) - (\alpha_{k} + c_k+ \beta^{\triangle}_{k} +\delta^{\square}_{k})p_k(t)+
c_{k+1}p_{k+1}(t).
\label{eq:kolm-kn3}
\ee

Following the same methodology as for Theorem~\ref{thm:hydroKn}, we formulate a the limiting mean-field equation as given below.
\begin{theorem}\label{thm:3-simplexHydro} Fix a time horizon $T>0$ and assume that the following uniform pointwise limit 
\be \label{eq:unipok3}
 \lim_{ N \to \infty} \sup_{t \le T} \Big\|\frac{1}{N} m^N(t) - m_I(t)\Big\|_{\infty} = 0
\ee
exists as a deterministic function on $[0,1]$. Furthermore, assume that $\eta^N_t$, defined by Equation~\eqref{eq:rep1}, satisfies 
\be \label{eq:ass0k3}
\lim_{N \to \infty} \sup_{t \le T} \E(\eta^N_t) = 0.
\ee
Let $\gamma$ denote the recovery rate. Define, as for the previous case of Equation~\eqref{lambda-mu} and adding another positive parameter $\theta$, the scaled pressures 
\be
\tau = \lambda N^{-1}, \quad \beta = \mu N^{-2}, \quad \delta = \theta N^{-3}
\ee
corresponding respectively to infections coming from 1-,2- and 3- simplices. Then 
\be
\lim_{N\to \infty} \frac{d}{dt}\frac{m_N(t)}{N}=\frac{d}{dt}m_I(t) = (\lambda - \gamma) m_I(t) + \Big( \frac{\mu}{2} - \lambda\Big) m_I(t)^2 +\left(\frac{\theta}{6}-\frac{\mu}{2}\right) m_I(t)^3 - \frac{\theta}{6} (m_{I}(t))^4.
\ee
\end{theorem}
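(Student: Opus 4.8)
The plan is to mirror the proof of Theorem~\ref{thm:hydroKn} almost verbatim, the only genuinely new ingredient being the cubic birth term $\delta^\square_k$. First I would extract the evolution of the first moment $m^N(t)=\sum_k k\,p_k(t)$ from the Kolmogorov system~\eqref{eq:kolm-kn3}. Multiplying by $k$, summing over $k$, and performing the two index shifts used in Appendix~\ref{App:2-simp-proof}, all of the interior telescoping cancels and one is left with
\be
\frac{d}{dt}m^N(t) = \E\big[a_{I^N_t} + \beta^\t_{I^N_t} + \delta^\square_{I^N_t} - c_{I^N_t}\big],
\ee
that is, exactly the 2-complex expression plus the single extra expectation $\E[\delta^\square_{I^N_t}]$. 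This reduces the whole task to evaluating the large-$N$ limit of that one new term, since the others have already been handled in Theorem~\ref{thm:hydroKn}.

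Next I would divide by $N$, pass to the density variable $x=I^N_t/N$, and insert the scalings $\tau=\lambda N^{-1}$, $\beta=\mu N^{-2}$, $\delta=\theta N^{-3}$. The pairwise and triadic contributions reproduce $\lambda x(1-x)$ and $\tfrac{\mu}{2}x^2(1-x)$ up to $O(1/N)$ corrections exactly as before, while the new term becomes
\be
\frac{1}{N}\E[\delta^\square_{I^N_t}] = \frac{\theta}{6}\,\E\Big[x\Big(x-\tfrac1N\Big)\Big(x-\tfrac2N\Big)(1-x)\Big],
\ee
whose $N\to\infty$ limit is $\tfrac{\theta}{6}\,\E[x^3(1-x)]$, the $-1/N$ and $-2/N$ shifts inside $\binom{I}{3}$ being uniformly negligible because $0\le x\le 1$. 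Collecting the $x^3$ contributions from the triadic and 3-simplex terms then produces exactly the $(\tfrac{\theta}{6}-\tfrac{\mu}{2})m_I^3-\tfrac{\theta}{6}m_I^4$ block of the claimed ODE.

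The remaining step is to turn expectations of polynomials in $x$ into polynomials of the deterministic limit $m_I(t)$. Writing $x=\tfrac1N m^N+\eta^N_t$ as in~\eqref{eq:rep1}, expanding each power $x^p$ (now up to the quartic, $p\le4$), and taking expectations, every cross term factors as $\big(\tfrac1N m^N\big)^{p-j}\E[(\eta^N_t)^j]$ with $j\ge1$ and therefore vanishes: $\eta^N_t$ is uniformly bounded (since $|x|,|\tfrac1N m^N|\le1$) and converges weakly to the constant $0$ under~\eqref{eq:ass0k3}, so by bounded convergence all of its moments vanish, upgrading the weak convergence to $\mathcal L^p$ as recorded in~\eqref{eq:ass2}. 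Hence $\E[x^p]\to m_I(t)^p$ for every $p\le4$, and the right-hand side converges pointwise, indeed uniformly on $[0,T]$ thanks to~\eqref{eq:unipok3}, to the stated cubic-plus-quartic vector field evaluated at $m_I(t)$.

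Finally I would justify passing the limit through the time derivative: since $\tfrac1N m^N\to m_I$ uniformly on $[0,T]$ and the derivatives $\tfrac{d}{dt}\tfrac1N m^N$ converge uniformly to the limiting vector field, $m_I$ is differentiable and solves the claimed ODE, while the weak convergence $\tfrac1N I^N_t\to m_I(t)$ follows from the same concentration input. I expect the real difficulty to lie not in the algebra, which is a mechanical extension of Theorem~\ref{thm:hydroKn}, but in the two places where the postulated concentration is invoked: controlling the fourth-moment correction uniformly in $t$ (the quartic term is new at this order and requires $\eta^N_t$ in $\mathcal L^4$, which is automatic here from uniform boundedness but would be the crux of any fully self-contained proof of~\eqref{eq:ass0k3}), and justifying the limit/derivative interchange rather than merely the pointwise convergence of the vector field.
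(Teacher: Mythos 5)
Your proposal is correct and follows essentially the same route as the paper's proof: isolate the new cubic birth term $\delta^\square_k$, telescope the moment equation so that it contributes $\E[\delta^\square_{I^N_t}]$ on top of the already-handled 2-complex terms, scale with $\delta=\theta N^{-3}$, and convert $\E[(I^N_t/N)^p]$ into $m_I(t)^p$ for $p\le 4$ via the boundedness of $\eta^N_t$ and the assumed concentration, finishing with the uniform-convergence interchange of limit and derivative. The only differences are presentational (your compact birth-minus-death formulation and factored form of $\binom{I}{3}(N-I)$ versus the paper's explicit index-shifted sums and monomial expansion), not mathematical.
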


A sketch of the proof is given in Appendix~\ref{App:2-simp-proof}. The finer technical details of the proof of this theorem are not included, as the arguments follow closely those of Theorem \ref{thm:hydroKn}. 

In Figure~\ref{fig:two_close_curves} (right) we show the outcome of numerical tests comparing the expected number/fraction of infectious nodes based on the Kolmogorov equations versus that resulting from solving one single ordinary differential equation, the mean-field limit. As before, the agreement between the two is excellent. The relative increase in the endemic levels in a one-to-one comparison between the left and rightmost panels reveals that the added infectious pressure brought my higher-order simplices (3-simplices) contributes less than the lower order ones (2-simplices). Furthermore, the effect of the 3-simplices becomes more/less significant when the pairwise infection rate decreases/increases.

\subsection{Higher-order $SIS$ epidemics on a complete simplicial $M$-complex}

In the previous subsections we chose to present the proof of Theorem \ref{thm:hydroKn} for a simplicial contagion up to order 2 (triangles) only for purposes and clarity. As the proof for order 3 demonstrate, the scaling performed can be generalised to higher-order simplices by building on previously obtained scaling limits. 
This means that the ingredients of the previous two proofs are sufficient to prove a general hydrodynamic result via induction, where extra infection pressure can come from any complete simplicial i-complex $1 \le i \le M$ for some fixed $M$. The formal proof is left to the reader, as it is a repetition of the previous steps and induction. The base case of the induction when $M=2$ is the proof of Theorem \ref{thm:hydroKn}.

\begin{theorem}\label{thm:FullHydro} Fix a time horizon $T>0$ and assume that the following uniform pointwise limit 
\be \label{eq:unipok}
 \lim_{ N \to \infty} \sup_{t \le T} \Big\|\frac{1}{N} m^N(t) - m_I(t)\Big\|_{\infty} = 0,
\ee
exists as a deterministic function on $[0,1]$.  Furthermore assume  $\eta^N_t$, defined by \eqref{eq:rep1}, satisfies 
\be \label{eq:ass0k}
\lim_{N \to \infty} \sup_{t \le T} \E(\eta^N_t) = 0
\ee
Let $\gamma$ denote the recovery rate. Define $r_i$ as the infection pressure on one single susceptible node in a generic active simplex of size ($i+1$), with all the other $i$ nodes being infected, $1 \le i \le M$ for some $M$ finite. Furthermore assume that each $r_i$ scales according to the order of appearance of the simplex i.e. 
\be
r_i = s_i N^{-i},  \quad k\ge 1
\ee
for some $s_i \ge 0$.

Then

\be
\lim_{N\to \infty} \frac{d}{dt}\frac{m_N(t)}{N} = \frac{d}{dt}m_I(t) = \left(s_1-\gamma \right)m_{I}(t) + \sum_{i=1}^{M-1} \left(\frac{s_{i+1}}{(i+1)!} - \frac{s_{i}}{i!} \right) (m_I(t))^{i+1}-\frac{s_{M}}{M!}(m_I(t))^{M+1}.
\ee
\end{theorem}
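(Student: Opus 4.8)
The plan is to prove Theorem~\ref{thm:FullHydro} by induction on the maximal order $M$, using Theorem~\ref{thm:hydroKn} (the case $M=2$) as the base case, but the cleanest route is actually a direct computation that makes the inductive structure transparent. I would start by writing down the general Kolmogorov equations in the compact ``telescoping'' form already used in Equations~\eqref{eq:kolm-kn2} and~\eqref{eq:kolm-kn3}: the birth-death generator has up-rates $a_k + \sum_{i=2}^{M} g^{(i)}_k$ and down-rate $c_k = \gamma k$, where the order-$i$ pressure is $g^{(i)}_k = r_i \binom{k}{i}(N-k)$ (the number of ways to choose $i$ infected nodes from $k$ times the $N-k$ susceptible targets), with the convention $g^{(i)}_k = 0$ for $k < i$. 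This unifies $a_k = r_1 k(N-k)$, $\beta^\triangle_k = r_2\binom{k}{2}(N-k)$ and $\delta^\square_k = r_3\binom{k}{3}(N-k)$ into a single family.

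**First** I would differentiate $m^N(t) = \sum_k k\, p_k(t)$ and perform the index shifts exactly as in Appendix~\ref{App:2-simp-proof}. The linear recovery/pairwise part contributes $\sum_k k[a_{k-1}p_{k-1} - (a_k+c_k)p_k + c_{k+1}p_{k+1}]$, which by the standard summation-by-parts gives the $(s_1-\gamma)m_I - s_1 m_I^2$ pieces in the limit once I substitute $r_1 = s_1 N^{-1}$. Each higher-order term $g^{(i)}$ contributes, after the same shift, a term $\sum_{\ell} g^{(i)}_\ell p_\ell(t) = r_i\sum_\ell \binom{\ell}{2}\!\!\big|_{i}(N-\ell)p_\ell(t)$; the key computation is the asymptotics of the moment $\E\big[\binom{I^N_t}{i}(N - I^N_t)\big]$. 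Because $\binom{k}{i} = k^i/i! + O(k^{i-1})$ and we have concentration of $I^N_t/N$ around $m_I(t)$ (guaranteed by the hypotheses~\eqref{eq:unipok} and~\eqref{eq:ass0k} on $m^N/N$ and $\eta^N_t$, which upgrade to $\mathcal L^p$ control of the density exactly as in Equation~\eqref{eq:ass2}), I get
\be
\lim_{N\to\infty} r_i\, \frac{1}{N}\E\Big[\binom{I^N_t}{i}(N-I^N_t)\Big] = s_i N^{-i}\cdot N^{i}\cdot \frac{1}{i!}\, m_I(t)^i\big(1-m_I(t)\big) = \frac{s_i}{i!}\,m_I(t)^i\big(1-m_I(t)\big),
\ee
where the scaling $r_i = s_i N^{-i}$ is precisely what cancels the $N^i/i!$ leading coefficient of $\binom{I^N_t}{i}$ against the single division by $N$ inherited from Equation~\eqref{eq:rep1}, leaving an order-one limit. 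Summing the contribution $\frac{s_i}{i!}(m_I^i - m_I^{i+1})$ over $1 \le i \le M$ and collecting coefficients of each power $m_I^{i+1}$ yields the telescoping structure $\big(\frac{s_{i+1}}{(i+1)!} - \frac{s_i}{i!}\big)m_I^{i+1}$ for the interior terms, with $(s_1-\gamma)m_I$ at the bottom and $-\frac{s_M}{M!}m_I^{M+1}$ at the top.

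**The main obstacle** will be justifying the interchange of limit and expectation needed to replace $\E[(I^N_t/N)^i]$ by $m_I(t)^i$, and controlling the subleading $O(k^{i-1})$ terms in the binomial expansion. Both are handled by the concentration hypothesis: writing $I^N_t/N = m^N(t)/N + \eta^N_t$ and expanding the $i$-th power, every cross term involving $\eta^N_t$ vanishes because the uniform boundedness of $\eta^N_t$ together with~\eqref{eq:ass2} forces $\E[(\eta^N_t)^p]\to 0$ for all $p \le i+1$, so moments factorize through the deterministic limit. The lower-order binomial corrections carry a strictly smaller power of $N$ and hence vanish after the $N^{-i}$ scaling. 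Since the induction merely adds one more term of this type at each step and the computation for the new top-order term is identical in form to the $M=3$ case treated in Theorem~\ref{thm:3-simplexHydro}, no genuinely new estimate is required beyond those of Theorem~\ref{thm:hydroKn}; the work is entirely in the bookkeeping of the binomial coefficients and the telescoping sum, which is why the formal induction can be safely left to the reader.
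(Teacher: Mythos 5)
Your proposal is correct and follows essentially the same route the paper intends: the unified pressure $g^{(i)}_k = r_i\binom{k}{i}(N-k)$, the index shift collapsing each order's contribution to $\sum_\ell g^{(i)}_\ell p_\ell(t)$, the moment asymptotics under the scaling $r_i = s_i N^{-i}$, and the concentration argument via $\eta^N_t$ are exactly the steps of Theorems \ref{thm:hydroKn} and \ref{thm:3-simplexHydro}, which the paper declares sufficient for the general case. Organizing the argument as a single direct computation over all orders $1 \le i \le M$ (rather than the formal induction the paper gestures at) is a cosmetic difference; the per-order contribution $\frac{s_i}{i!}\,m_I^i(1-m_I)$ and the resulting telescoping coefficients match the stated limit equation.
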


Note that the above also works for the pairwise infection ($M=1$), since in that case the sum is interpreted as empty and only the first and last term survives in the equation above. As expected, we have $dm_{I}(t)/dt=\left(s_1-\gamma \right)m_{I}(t)-s_1m_{I}(t)^2$.

\section{Bifurcation analysis of the resulting mean-field models}\label{sec:bif}
We now explore the phase diagrams associated to the mean-field models derived in the previous sections, for both structures.

\subsection{The complete simplicial 2-complex}  

From Theorem \ref{thm:hydroKn}, the limiting equation is
\begin{equation}
\frac{d}{dt}m_I(t) = (\lambda - \gamma) m_I(t) + \Big( \frac{\mu}{2} - \lambda\Big) m_I(t)^2 -\frac{\mu}{2} m_I(t)^3.
\label{eq:lim0_bif}
\end{equation}

The steady state $m_I=x$ is determined by this equation when $\displaystyle \frac{d}{dt}m_I(t) =0$ is substituted, i.e. $x$ is the solution of
\begin{equation}
0 = (\lambda - \gamma) x + \Big( \frac{\mu}{2} - \lambda\Big) x^2 -\frac{\mu}{2} x^3 = x \Big(  (\lambda - \gamma) +  \Big( \frac{\mu}{2} - \lambda\Big) x -\frac{\mu}{2} x^2  \Big).
\end{equation}

Assuming that the 2-simplices are actually contributing to the epidemic, that is $\mu \neq 0$, we have up to three solutions
\be \label{eq:solution_2complex}
x =0, \quad x = \frac{ \frac{\mu}{2} - \lambda  \pm \sqrt{ \Big( \frac{\mu}{2} - \lambda\Big)^2 + 2 \mu (\lambda - \gamma) }}{\mu}.
\ee

The existence of an epidemic-free state (trivial solution) does not depend on any parameter and it is always accessible. For the other solution, we can easily compute from Eq.~\eqref{eq:solution_2complex} the bifurcation diagram in the $(\lambda, \mu)$ parameter plane by using the following elementary facts about the solutions of a quadratic equation in the form $0=c+bx-ax^2$. Denoting the discriminant by $D=b^2+4ac$ and the position of the maximum by $m=\frac{b}{2a}$, the following cases can be distinguished:
\begin{enumerate}
\item If $D<0$, then there are no real solution;
\item If $D>0$ and $c>0$, then there is a positive and a negative solution;
\item If $D>0$, $c<0$ and $m>0$, then there are two positive solutions;
\item If $D>0$, $c<0$ and $m<0$, then there are two negative solutions.
\end{enumerate}
In our case, when $a=\mu/2$, $b=\mu/2-\lambda$, and $c=\lambda -\gamma$, the discriminant curve, where $D=0$ (in the positive quadrant of the parameter plane) takes the form $\lambda=-\mu/2+\sqrt{2\mu \gamma}$ and $m>0$ is equivalent to $\mu > 2 \lambda$.

Applying the simple rules above to our case leads to the bifurcation diagram shown in Figure~\ref{fig:bif_diag_pair_and_tri} (left), whose labelled regions are divided as follows:
\begin{enumerate}
\item If $\lambda < -\mu/2+\sqrt{2\mu \gamma}$, there is no non-trivial solution (domain D);
\item If $\lambda > -\mu/2+\sqrt{2\mu \gamma}$, and $\lambda > \gamma$, there is a positive and a negative solution (domain B);
\item If $\lambda > -\mu/2+\sqrt{2\mu \gamma}$, $\lambda < \gamma$ and $\mu > 2 \lambda$, there are two positive solutions (domain C);
\item If $\lambda > -\mu/2+\sqrt{2\mu \gamma}$, $\lambda < \gamma$ and $\mu < 2 \lambda$, there are two negative solutions (domain A).
\end{enumerate}

\begin{figure}[t]
     \centering
     \includegraphics[width=0.49\linewidth]{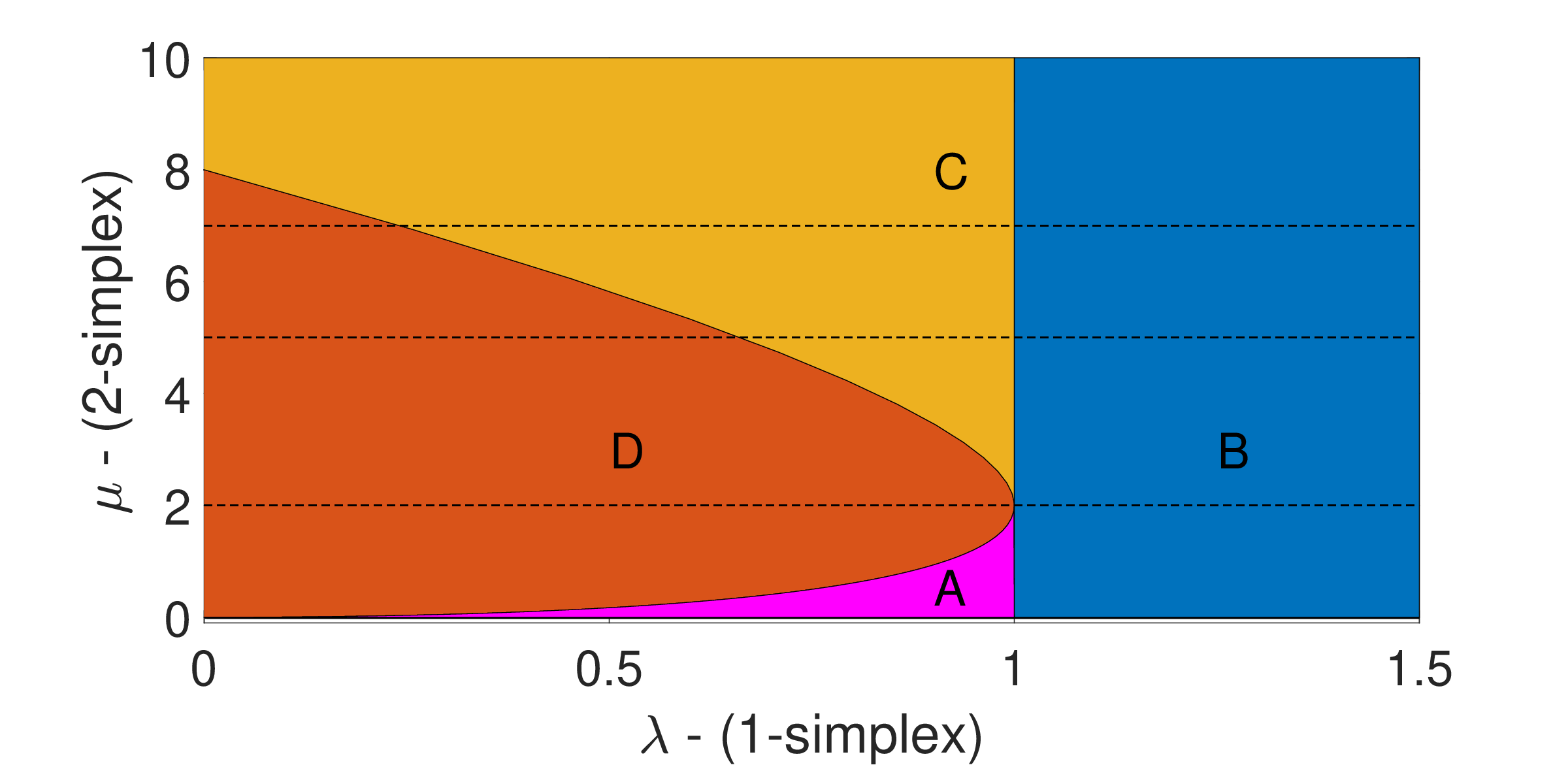}
          \includegraphics[width=0.49\linewidth]{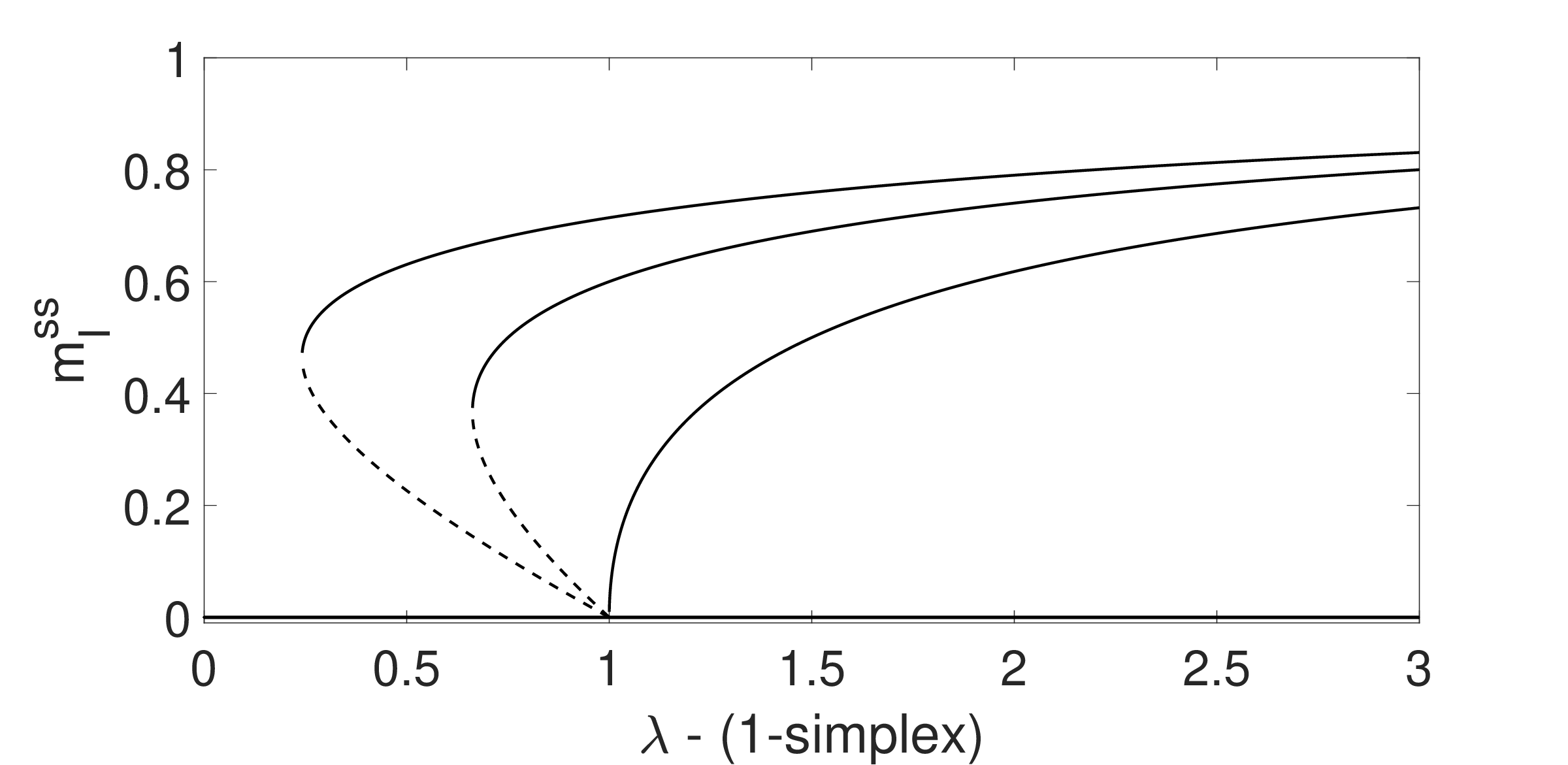}     
     \caption{
     Bifurcation diagrams for number and nature of the steady states of the differential equation in Eq.~\eqref{eq:lim0_bif} associated to a simplicial contagion dynamics on a complete simplicial 2-complex.
     Without loss of generality, we set $\gamma=1$; non-dimensionalising time  would make $\gamma$ superfluous.
     (Left) Different steady state configurations as a function of the two rescaled infectibvity parameters $\lambda$ (1-simplices) and $\mu$ (2-simplices): (A) zero and two negative solutions, (B) zero, one negative and one positive solution (the $\mu=2\lambda$ line splits this region into two further regions, above this line the magnitude of the positive solution is greater than that of the negative one, and otherwise below), (C) zero and two positive solutions, and (D) the zero solution only. (Right) Typical plot of the steady states as a function of the standard 1-simplices infectivity for two fixed values of $\mu=7,5$ and $2$ (from top to bottom). The cross sections corresponding to these values are plotted with horizontal dashed-lines in the leftmost panel.}
     \label{fig:bif_diag_pair_and_tri}
\end{figure}

The stability of these steady states can be obtained from the fact that in the cubic case, with a negative coefficient in the cubic term, the middle steady state is unstable and the other two states are stable. That is, for parameter pairs in domains A and D, $x=0$ is the only stable equilibrium. For parameter pairs in domain B, $x=0$ is unstable and there is a globally stable positive equilibrium. Finally, for parameter pairs in domain C, $x=0$ is stable and there is another stable positive equilibrium. Their basin of attraction is separated by the third, unstable steady state (also positive).

The global bifurcation picture can be sampled from the plane by ``cutting'' it at different values of $\mu$. The result is shown in Figure~\ref{fig:bif_diag_pair_and_tri} (right), where the prevalence at the steady state, denoted by $m^{SS}_I$, are plotted against the rescaled rate of infection via 1-simplices $\lambda$ for different values of the one for 2-simplices $\mu$. We note that two distinct bifurcation scenarios are possible; namely a simple transcritical bifurcation and a fold bifurcation leading to bi-stability. It is worth noting that bi-stability only appears for large values of $\mu$, i.e. a certain amount of infection mediated by 2-simplices is needed in order to sustain a region of bi-stability. We can make this more precise. In Figure~\ref{fig:bif_diag_pair_and_tri} (left) we notice that one requirement on $\mu$ for the system to display bi-stability, be in region C, when $\lambda$ is varied, is to have $\mu>2\gamma$; this is the critical $\mu_c=2\gamma$ where the parabola achieves its maximum when viewed in the ($\mu,\lambda$) plane. However, the conditions for the system to go through region C are $\lambda<\mu/2$, $\lambda<\gamma$, and $\lambda > -\mu/2+\sqrt{2\mu \gamma}$. At this point one needs to check that, if $\mu>2\gamma$ holds, then one can always find $\lambda>0$ such that the other three conditions are also satisfied. This will then guarantee that the system will go through region C when $\lambda$ is varied. One of the conditions is trivial, we can always choose a $\lambda>0$ such that $\lambda<\gamma$. However, $\lambda$ also needs to be such that $\lambda<\mu/2$. This leads to requiring that $\lambda<\min\left\{\gamma,\mu/2\right\}$. By the original assumption, $\gamma<\mu/2$, which leads to $\lambda<\gamma$. This leaves us to check that there are $\lambda$ values satisfying simultaneously the following inequaltiy: 
\be
-\frac{\mu}{2}+\sqrt{2\mu \gamma}<\lambda<\gamma.
\ee
Such values of $\lambda$ exist if and only if $-\frac{\mu}{2}+\sqrt{2\mu \gamma}<\gamma$, which is equivalent to $\sqrt{2\mu \gamma}<\gamma+\frac{\mu}{2}$. However, this is the direct consequence of the inequality between the geometric and arithmetic mean, that is,
\be
\sqrt{2\gamma\mu}<\frac{2\gamma+\mu}{2}=\gamma+\mu.
\ee

Setting $\mu = 0$ (i.e. no extra infection pressure beyond pairs) in Equation~\eqref{eq:lim0} leads to 
\be
x = 0, \quad x = 1 - \frac{\gamma}{\lambda},
\ee
which are the  steady states for the classical model with pairwise infection only.

We can use perturbation theory methods to expand the solutions of the quadratic Equation~\eqref{eq:lim0_bif}, that is the steady states, as a function of $\mu$. This effectively means that we perceive the additional infection across complete simplicial 2-complexes as a perturbation of the classic system with infections across links only. This leads to

\begin{equation}\label{eq:asympt_exp_2_simplex}
 x\simeq 1-\frac{\gamma}{\lambda}+\frac{\gamma(\lambda-\gamma)}{2\lambda^3}\mu.
\end{equation}
The zeroth-order approximation corresponds to the classical SIS model that accounts for pairwise interactions only, that is $1-\frac{\gamma}{\lambda}$. As expected, the infection across complete simplicial 2-complexes increases the value of the endemic steady  state. More precisely, from Equation~\eqref{eq:asympt_exp_2_simplex}, the contribution of the simplicial contagion is of $O(1/\lambda^2)$. Note that the expansion above only works for the case of $\lambda>\gamma$, which is the standard epidemic threshold for systems composed by pairwise interactions only. Once 2-simplices are added, the observed behaviour is usually interpreted in terms of tipping point dynamics, where depending on the initial condition the epidemic either dies out or it reaches a stable endemic equilibrium~\cite{gladwell2001tipping}.

\subsection{The complete simplicial 3-complex}

From Theorem~\ref{thm:3-simplexHydro}, the limiting equation is
\begin{equation}
 \frac{d}{dt}m_I(t) = (\lambda - \gamma) m_I(t) + \Big( \frac{\mu}{2} - \lambda\Big) m_I(t)^2 +\left(\frac{\theta}{6}-\frac{\mu}{2}\right) m_I(t)^3 - \frac{\theta}{6} m_{I}(t)^4.  
 \label{eq:3-simplex-ODE}
\end{equation}

The steady state $m_I=x$ is determined by this equation when $\displaystyle \frac{d}{dt}m_I(t) =0$ is substituted, i.e. $x$ is the solution of
\begin{equation}
0 = (\lambda - \gamma) x + \Big( \frac{\mu}{2} - \lambda\Big) x^2 +\left(\frac{\theta}{6}-\frac{\mu}{2}\right)x^3 - \frac{\theta}{6} x^4= x \Big[  \lambda - \gamma +  \left( \frac{\mu}{2} - \lambda\right) x +\left(\frac{\theta}{6}-\frac{\mu}{2}\right)x^2-\frac{\theta}{6} x^3  \Big].
\label{ss_quartic}
\end{equation}
As for the previous case, we will now determine the bifurcation diagram in the $(\lambda, \mu)$ parameter plane, while the other two parameters, $\gamma$ and $\theta$, will be fixed. The number of steady states is determined, in this cubic case as well, by the discriminant curve. We will exploit the advantages of the parametric representation method that parametrises the discriminant curve by the steady state value $x$ (see Ref.~\cite{simon1999constructing} for more details).

In order to apply the parametric representation method, the cubic term in Equation~\eqref{ss_quartic} is written in the form
\begin{equation}
f_0(x)+ \lambda f_1(x) + \mu f_2(x) =0 ,
\label{ss_cubic}
\end{equation}
where
\begin{equation}
f_0(x)= \frac{\theta}{6} (x^2-x^3)-\gamma, \qquad f_1(x)= 1-x, \qquad f_2(x)= \frac12 (x-x^2) .
\label{f0f1f2}
\end{equation}
The discriminant consist of those parameter pairs, where the function has a double root, that is its derivative is also zero at the root, i.e.
\begin{equation}
f_0'(x)+ \lambda f_1'(x) + \mu f_2'(x) =0
\label{ss_cubic_prime}
\end{equation}
holds as well. The main idea of the parametric representation method is to solve system given by Equations \eqref{ss_cubic}-\eqref{ss_cubic_prime} for $\lambda$ and $\mu$ in terms of $x$, leading to the parametric expression of the discriminant curve. This is especially useful, when the parameters are involved linearly, as in our case. Then the solution of Equations \eqref{ss_cubic}-\eqref{ss_cubic_prime} can be easily given as
\begin{equation}
\lambda = \frac{f_0'(x) f_2(x)-f_0(x)f_2'(x)}{f_1(x)f_2'(x) - f_1'(x) f_2(x)} \ , \qquad \mu = \frac{f_0(x) f_1'(x)-f_0'(x)f_1(x)}{f_1(x)f_2'(x) - f_1'(x) f_2(x)} \ .
\label{discriminant_curve_gen}
\end{equation}
After substituting Equation~\eqref{f0f1f2} into Equation~\eqref{discriminant_curve_gen}, the discriminant curve is obtained as
\begin{equation}
\lambda = \frac{\theta}{6} x^2 + \gamma \frac{1-2x}{(1-x)^2} \ , \qquad \mu = \frac{2\gamma}{(1-x)^2} -  \frac{\theta}{3} x \ .
\label{discriminant_curve}
\end{equation}
The advantage of the parametric representation method is evident since eliminating $x$ from these two equations is quite complicated, i.e. to derive the equation of the discriminant curve without parametrising with $x$ would be difficult.

We now numerically plot the discriminant curve while varying the value of $x$, as it is shown in Figure~\ref{fig:bif_theta_2_10} for two fixed values of the highest-order infectivity parameters $\theta = 2$ (left) and $\theta = 10$ (right) ---modulating the effects of the 3-simplices. The parameter $x$ of the curve varies along the real axis. We divide the curve into three parts, shown with different colours, as follows. The curve tends to infinity when $x=1$, hence we will consider the part belonging to $x<1$ and $x>1$ separately. Moreover, we are interested in positive steady states, hence we will divide the the $x<1$ part into two parts, belonging to $x<0$ and to $0<x<1$. The parameters $\lambda$ and $\mu$ are positive, hence we will investigate only the positive quadrant of the $(\lambda, \mu)$ parameter plane. The following observation can be made based on the shown results, and can be also proved by elementary calculations. 

\begin{proposition}
The following statements hold for the discriminant curve given in Equation~\eqref{discriminant_curve}.
\begin{enumerate}
\item The part belonging to $x>1$ does not enter the positive quadrant, hence we will not consider it in further investigations.
\item The curve touches the vertical line $\lambda = \gamma$ at $\mu=2\gamma$ where $x=0$.
\item The curve is locally on the left-hand-side of the vertical line $\lambda = \gamma$, if $\theta < 6\gamma$, and it lies on the right-hand-side of this vertical line, when $\theta > 6\gamma$.
\end{enumerate}
\end{proposition}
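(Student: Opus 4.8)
The plan is to read all three claims directly off the parametric representation \eqref{discriminant_curve}, treating $\gamma$ and $\theta$ as fixed positive constants and $x$ as the curve parameter. Everything reduces to elementary sign analysis of two rational functions, so no idea beyond careful bookkeeping is needed; the one simplification that drives the whole argument is the identity
\be
\lambda(x) - \gamma = x^2\left[\frac{\theta}{6} - \frac{\gamma}{(1-x)^2}\right],
\ee
obtained by combining $\frac{\theta}{6}x^2$ with $\gamma\big(\frac{1-2x}{(1-x)^2}-1\big) = -\frac{\gamma x^2}{(1-x)^2}$. This single factorisation disposes of parts (2) and (3) at once.

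For part (1) I would work on the branch $x>1$ and show that $\lambda>0$ and $\mu>0$ cannot hold simultaneously. Clearing the (positive) denominator $(1-x)^2$ turns the two positivity conditions into polynomial inequalities, $\mu>0 \Leftrightarrow \theta x (x-1)^2 < 6\gamma$ and, since $1-2x<0$ on this branch, $\lambda>0 \Leftrightarrow \theta x^2(x-1)^2 > 6\gamma(2x-1)$. Multiplying the first inequality by $x>0$ gives $\theta x^2(x-1)^2 < 6\gamma x$, and chaining it with the second yields $6\gamma(2x-1) < 6\gamma x$, i.e. $x<1$, contradicting $x>1$. Hence no point of the $x>1$ branch lies in the open positive quadrant.

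Part (2) is then immediate: substituting $x=0$ into \eqref{discriminant_curve} gives $(\lambda,\mu)=(\gamma,2\gamma)$, so the point lies on the curve, and the word \emph{touches} I would justify as tangency. Computing $\lambda'(x) = x\big[\frac{\theta}{3} - \frac{2\gamma}{(1-x)^3}\big]$ shows $\lambda'(0)=0$ while $\mu'(0)=4\gamma - \theta/3$ is generically nonzero, so the curve meets the vertical line $\lambda=\gamma$ with a vertical tangent rather than crossing transversally. For part (3) the displayed identity does all the work: near $x=0$ the bracket $\frac{\theta}{6}-\frac{\gamma}{(1-x)^2}$ is continuous with value $\frac{\theta}{6}-\gamma$, so for small $x\neq 0$ the sign of $\lambda-\gamma$ equals the sign of $\theta-6\gamma$ (the factor $x^2$ being positive). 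Thus the curve sits locally to the left of $\lambda=\gamma$ when $\theta<6\gamma$ and to the right when $\theta>6\gamma$.

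I do not expect a genuine obstacle, as each part is a short computation; the only points requiring care are the sign of $1-2x$ when clearing denominators on the $x>1$ branch, and pinning down the precise meaning of \emph{touches}, which the factor $x^2$ in the displayed identity makes transparent by forcing $x=0$ to be a double root of $\lambda(x)=\gamma$. The borderline values $\theta=6\gamma$ (where part (3) degenerates and one must pass to the next order in $x$) and $\theta=12\gamma$ (where $\mu'(0)=0$ as well) are exceptional and can be excluded or handled with one extra Taylor term if full rigour is wanted.
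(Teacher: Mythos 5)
Your proof is correct and takes essentially the same approach as the paper: the identity $\lambda-\gamma = x^2\bigl(\tfrac{\theta}{6}-\tfrac{\gamma}{(1-x)^2}\bigr)$ that drives your whole argument is exactly the rearrangement the paper itself gives as its only stated justification (for part (3)). Parts (1) and (2), which the paper leaves as ``elementary calculations'' supported by the numerical plots, are supplied by your chained-inequality contradiction on the $x>1$ branch and by the tangency computation $\lambda'(0)=0$, $\mu'(0)\neq 0$ combined with the double root at $x=0$; both check out against the curve as printed in Equation~\eqref{discriminant_curve}.
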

We note that the last statement follows easily when the formula for $\lambda$ in Equation~\eqref{discriminant_curve} is rearranged as
\be
\lambda = \gamma + x^2 \left( \frac{\theta}{6}  -  \frac{\gamma}{(1-x)^2} \right) .
\ee
Figure~\ref{fig:bif_theta_2_10} also shows that the discriminant curve may have a cusp point. By definition, the cusp point of a curve is that point where $\lambda'(x)=0$ and $\mu'(x)=0$ hold at the same time. The parametric representation offers an easy way to determine the cusp point, namely, the equation
\begin{equation}
f_0''(x)+ \lambda f_1''(x) + \mu f_2''(x) =0
\label{ss_cubic_doubleprime}
\end{equation}
holds as well at the cusp point. Solving the system of Equations~\eqref{ss_cubic}-\eqref{ss_cubic_prime}, \eqref{ss_cubic_doubleprime} for $x$  and substituting Equation~\eqref{f0f1f2} leads to the following equation for the $x_c$ parameter value of the cusp point:
\be
\theta (1-x_c)^3= 6\gamma .
\ee
This equation has a positive solution for $x_c$ if and only if $\theta > 6\gamma$. Hence, we have the following proposition.

\begin{proposition}
The following statements hold for the discriminant curve given in Equation~\eqref{discriminant_curve}.
\begin{enumerate}
\item The branch of the discriminant curve belonging to $x>0$ has a cusp point, if $\theta > 6\gamma$.
\item The branch of the discriminant curve belonging to $x>0$ is a convex arc, if $\theta < 6\gamma$.
\end{enumerate}
\end{proposition}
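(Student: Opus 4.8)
The plan is to work entirely within the parametric representation framework of \eqref{discriminant_curve_gen} rather than eliminating $x$ to obtain a Cartesian equation. The two objects that organise everything are $W(x) = f_1 f_2' - f_1' f_2$, the common denominator in \eqref{discriminant_curve_gen}, and $G(x) = f_0'' + \lambda f_1'' + \mu f_2''$, the left-hand side of the cusp condition \eqref{ss_cubic_doubleprime} evaluated along the curve. First I would differentiate the two defining identities \eqref{ss_cubic}--\eqref{ss_cubic_prime} with respect to $x$: differentiating \eqref{ss_cubic} and subtracting \eqref{ss_cubic_prime} gives $\lambda' f_1 + \mu' f_2 = 0$, while differentiating \eqref{ss_cubic_prime} gives $\lambda' f_1' + \mu' f_2' = -G$. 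Solving this linear system by Cramer's rule yields the compact formulas $\lambda'(x) = f_2 G / W$ and $\mu'(x) = -f_1 G / W$. Statement (1) is then immediate: a cusp requires $\lambda'=\mu'=0$, hence (since $f_1,f_2$ do not vanish simultaneously on the relevant range) $G(x)=0$, which the preceding computation already identified with $\theta(1-x_c)^3 = 6\gamma$, i.e. $x_c = 1-(6\gamma/\theta)^{1/3}$; this lies in $0<x_c<1$ exactly when $\theta>6\gamma$.

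For statement (2) I would test convexity through the signed curvature determinant $\Delta(x) := \lambda'\mu'' - \mu'\lambda''$, whose sign decides whether the arc is convex. Writing $H=G/W$ so that $\lambda'=f_2 H$ and $\mu'=-f_1 H$, a short differentiation makes the $HH'$ cross terms cancel and collapses the expression to the clean identity $\Delta = H^2\,(f_1 f_2' - f_1' f_2) = G^2/W$. A direct evaluation from \eqref{f0f1f2} gives $W(x) = \tfrac12 (1-x)^2$, which is strictly positive for every $x\neq 1$. Hence $\Delta(x) = G(x)^2 / W(x) \ge 0$ on the whole branch, with equality precisely at the cusp points where $G=0$.

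It then remains to exclude cusps and degeneracies on the relevant branch $0<x<1$ (recall the $x>1$ part is discarded by the previous proposition as lying outside the positive quadrant). When $\theta<6\gamma$ the only root of $G=0$ is $x_c = 1-(6\gamma/\theta)^{1/3}<0$, so $G(x)\neq 0$ and therefore $\Delta(x)>0$ strictly on $(0,1)$. Since moreover $(\lambda')^2+(\mu')^2 = H^2(f_1^2+f_2^2)$ vanishes only where $G=0$ or $x=1$, the arc is regular on $(0,1)$. A regular plane arc whose curvature is nowhere zero and of constant sign is a convex arc, which establishes (2).

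The hard part is not any of the bookkeeping but spotting the cancellation that produces $\Delta = G^2/W$; this is the step that converts the problem into the single positivity fact $W(x)=\tfrac12(1-x)^2>0$. Expanding $\lambda',\mu',\lambda'',\mu''$ straight from the explicit rational expressions \eqref{discriminant_curve} would instead yield a high-degree rational function whose sign is awkward to control, and would obscure the link to the cusp condition. Once the identity $\Delta=G^2/W$ is in hand, both parts of the proposition reduce to the same two elementary observations, namely the sign of $W$ and the location of the root of $G$.
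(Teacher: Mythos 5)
Your proof is correct and sits inside the same parametric-representation framework that the paper uses, but it is genuinely more self-contained: you prove the two facts that the paper either cites or leaves unproved. For part (1), the paper takes from Ref.~\cite{simon1999constructing} the fact that a cusp is a parameter value where \eqref{ss_cubic_doubleprime} holds alongside \eqref{ss_cubic}--\eqref{ss_cubic_prime}, and then solves that system to obtain $\theta(1-x_c)^3=6\gamma$. Your Cramer's-rule identities $\lambda'=f_2G/W$ and $\mu'=-f_1G/W$ \emph{prove} this criterion rather than cite it: on $0<x<1$ one has $f_1=1-x\neq 0$ and $W=\tfrac12(1-x)^2\neq 0$, so $\lambda'=\mu'=0$ holds exactly where $G=0$, and the unique real root $x_c=1-(6\gamma/\theta)^{1/3}$ lies in $(0,1)$ precisely when $\theta>6\gamma$. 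For part (2), the paper offers no argument at all (convexity in the absence of a cusp is implicitly delegated to the cited method), whereas your identity $\lambda'\mu''-\mu'\lambda''=G^2/W$ gives a checkable proof that reduces everything to the two elementary facts you isolate: $W>0$ for $x\neq 1$, and $G$ having its only real root at $x_c<0$ when $\theta<6\gamma$. That cancellation identity is the genuine added content relative to the paper's treatment.

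Two refinements are worth recording. First, nowhere-vanishing curvature of constant sign gives \emph{local} convexity only (a spiral satisfies it too); to conclude that the branch is a convex arc, note that when $\theta<6\gamma$ one has $G<0$ on all of $[0,1)$ (it is negative at $x=0$ and has no root there), hence $\mu'=-f_1G/W>0$ on $(0,1)$. The branch is therefore a graph $\lambda=\lambda(\mu)$, and $d^2\lambda/d\mu^2=-(\lambda'\mu''-\mu'\lambda'')/(\mu')^3<0$, so the graph is strictly concave and the arc is globally convex. Second, a consistency remark that protects your computation: substituting \eqref{f0f1f2} into \eqref{discriminant_curve_gen} gives $\mu=\tfrac{2\gamma}{(1-x)^2}-\tfrac{2\theta}{3}x$, i.e.\ the printed $\mu$ in \eqref{discriminant_curve} is missing a factor of $2$ in its last term. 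With the corrected formula, $G(x)=\tfrac{\theta}{3}(1-x)-\tfrac{2\gamma}{(1-x)^2}$, and $G=0$ is exactly the cusp equation $\theta(1-x)^3=6\gamma$ that you (and the paper) use; had you computed $G$ from the printed formula you would have obtained a different, incorrect cusp equation, so your argument is consistent with the correct curve rather than with the typo.
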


\begin{figure}[t]
     \centering
 \includegraphics[width=0.49\linewidth]{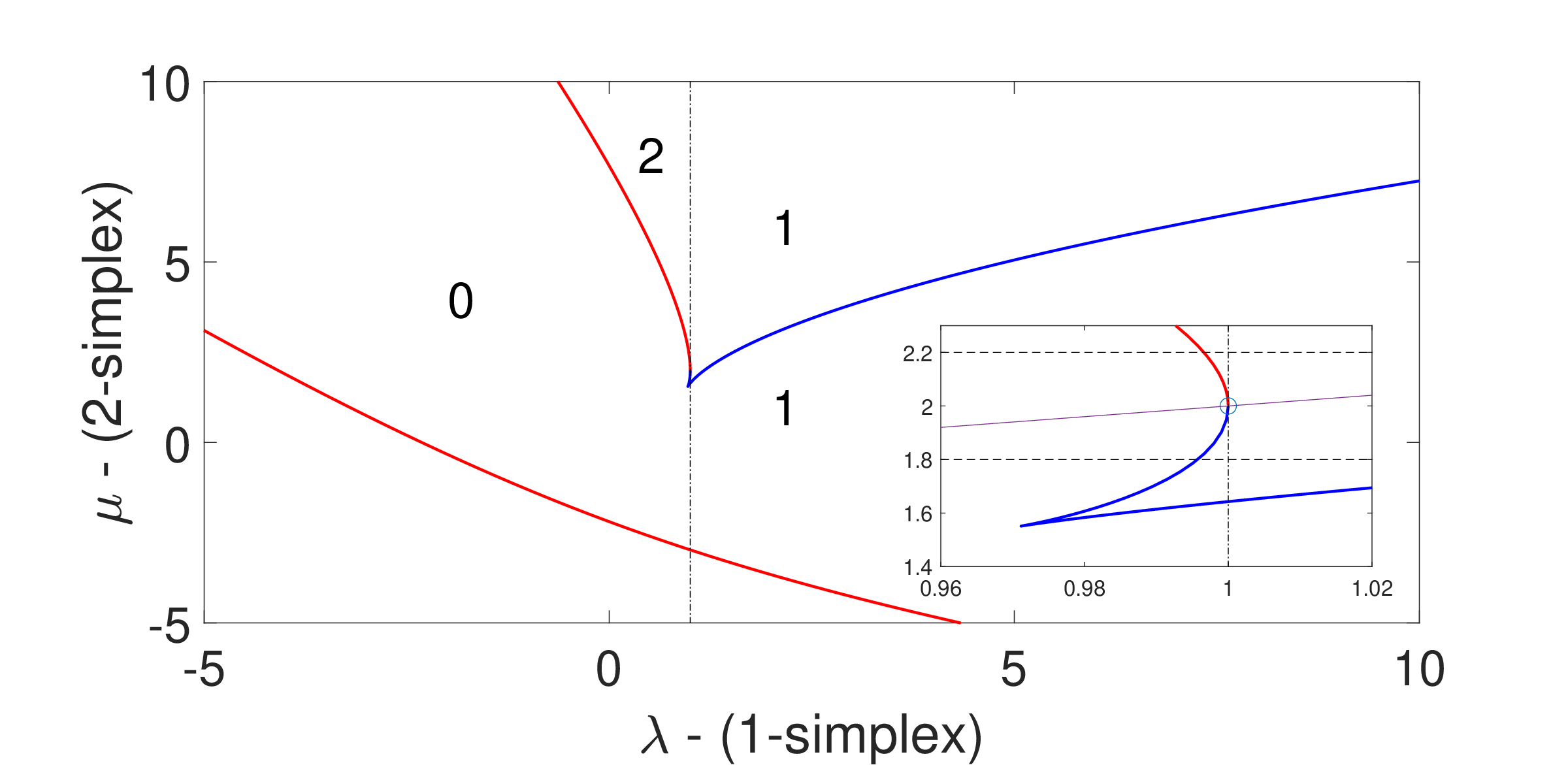}
 \includegraphics[width=0.49\linewidth]{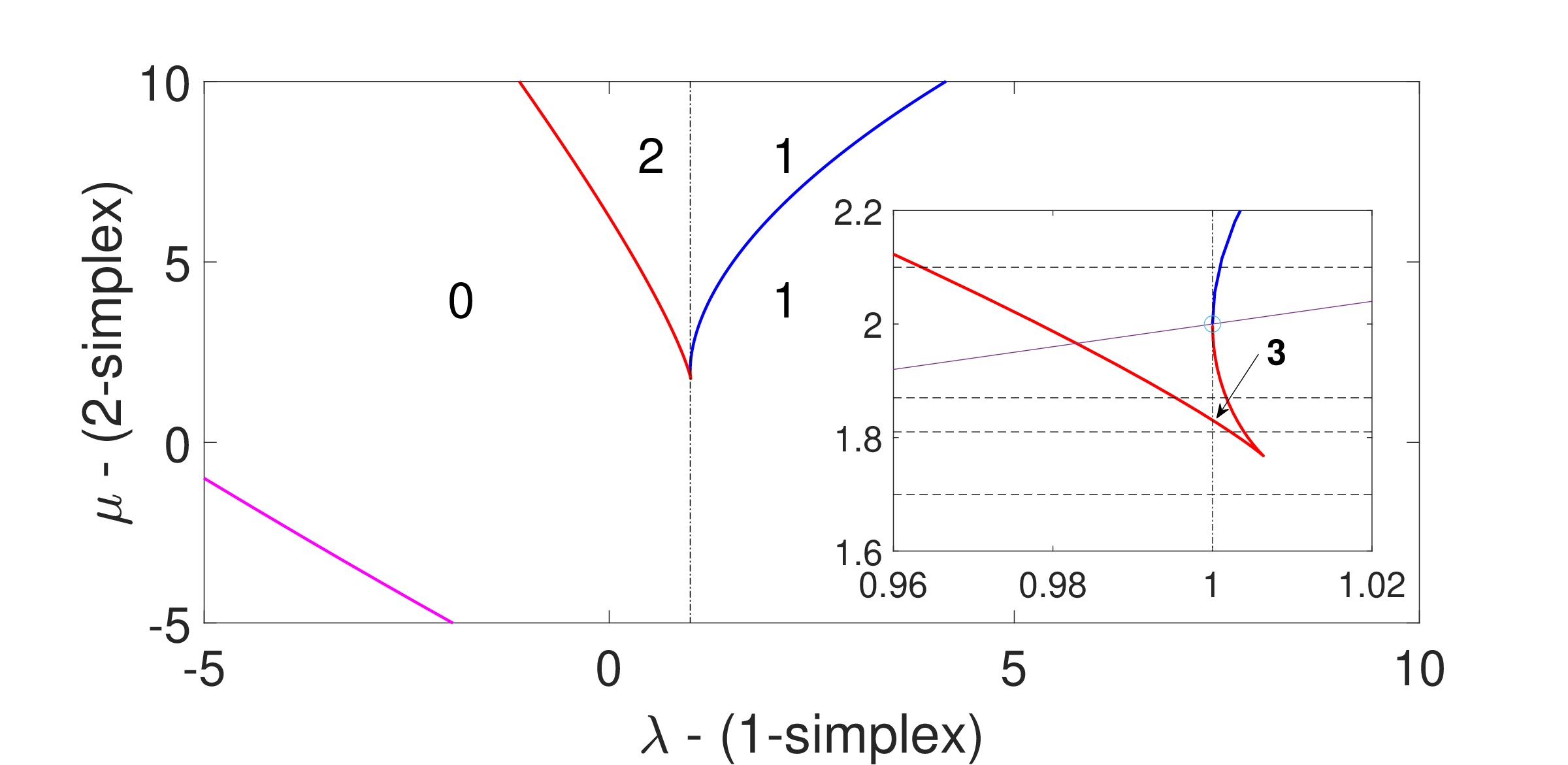} 
     \caption{ The full bifurcation analysis for a simplicial contagion dynamics on a complete simplicial 3-complex showing the number of solutions in the ($\lambda,\mu$) plane for two fixed values of $\theta=2$ (left) and $\theta=10$ (right); the numbers do not include the trivial disease-free steady state.
     Without loss of generality, we set $\gamma=1$; non-dimensionalising time  would make $\gamma$ superfluous.
     The magenta, red and blue lines correspond to $x>1$, $0<x<1$ and $x<0$, respectively. The insets show the subtle geometry of the cusps; these shift from the negative to the positive branch and thus increasing the number of possible steady states. The dot-dashed lines correspond to the $\lambda=\gamma$ line and the oblique line in the inset corresponds to the $\mu=2\lambda$ line, these are important to determine when three strictly positive steady states exist. Finally, the dashed lines in the insets represent fixed values of $\mu$ (1.8 and 2.2 in the leftmost panel, and 1.7, 1.81, 1.87 and 2.1 in the rightmost panel) where the bifurcation profile changes when $\lambda$ is varied; these are shown explicitly in Figures~\ref{fig:bif_theta_2_cross_section} and~\ref{fig:bif_theta_10_cross_section}. The marker ($\circ$) in the insets is the transition point from the positive to the negative branch. In the inset of the rightmost panel, the domain with three solutions, or four if zero is included, is the region bounded by the cusp and the $\lambda=\gamma=1$ line.}
        \label{fig:bif_theta_2_10}
\end{figure}

Finally, in order to construct the bifurcation diagram according to the number of positive steady states, we will make use of the so-called tangential property of the parametric representation method (see Ref.~\cite{simon1999constructing} for more details).

\begin{enumerate}
\item The number of solutions of Equation~\eqref{ss_cubic} for a given parameter pair $(\lambda, \mu)$ is equal to the number of tangents that can be drawn from $(\lambda, \mu)$ to the discriminant curve.
\item The values of solutions of Equation~\eqref{ss_cubic} for a given parameter pair $(\lambda, \mu)$ are the $x$ parameter values of the tangent points along the discriminant curve.
\end{enumerate}

The number of tangents that can be drawn to a convex arc can be easily read geometrically. It can be 2, 1 or 0 according to the position of the point from which we draw the tangent lines. For a cusp however, from certain points it is possible to draw three tangents and it is this property which allows to further increase the number of steady states.

Summarising, we can have two different bifurcation diagrams according to the specific values of the parameters $\theta$ and $\gamma$, as it is shown in Figure~\ref{fig:bif_theta_2_10}. If $\theta < 6\gamma$ (leftmost panel), the cusp of the bifurcation curve lies in the region where $x<0$, thus giving no biologically meaningful steady states. Note that, besides the number of solutions shown in the figure, the disease-free steady state is always a steady state. Contrarily, in the case $\theta > 6\gamma$ (rightmost panel), the cusp is on the $x>0$ branch, and thus the number of positive steady states can be 0, 1, 2 or 3. 

The problem of finding the number of positive solutions of Equation~\eqref{eq:3-simplex-ODE} can also be approached by using Descartes’s rule of signs. This states that the number of positive solutions is related to the sign changes in the coefficients of the polynomial when arranged in the canonical order. Since the coefficient of the highest-order term, $-\theta/6$, is negative, we must have that 
\be
\frac{\theta}{6}-\frac{\mu}{2}>0,\,\,\, \frac{\mu}{2}-\lambda<0, \,\,\,\lambda-\gamma>0 \Longleftrightarrow \theta>3\mu,\,\,\, \mu<2\lambda,\,\,\,\lambda>\gamma.
\ee

\begin{figure}[t]
     \centering     \includegraphics[width=0.49\textwidth]{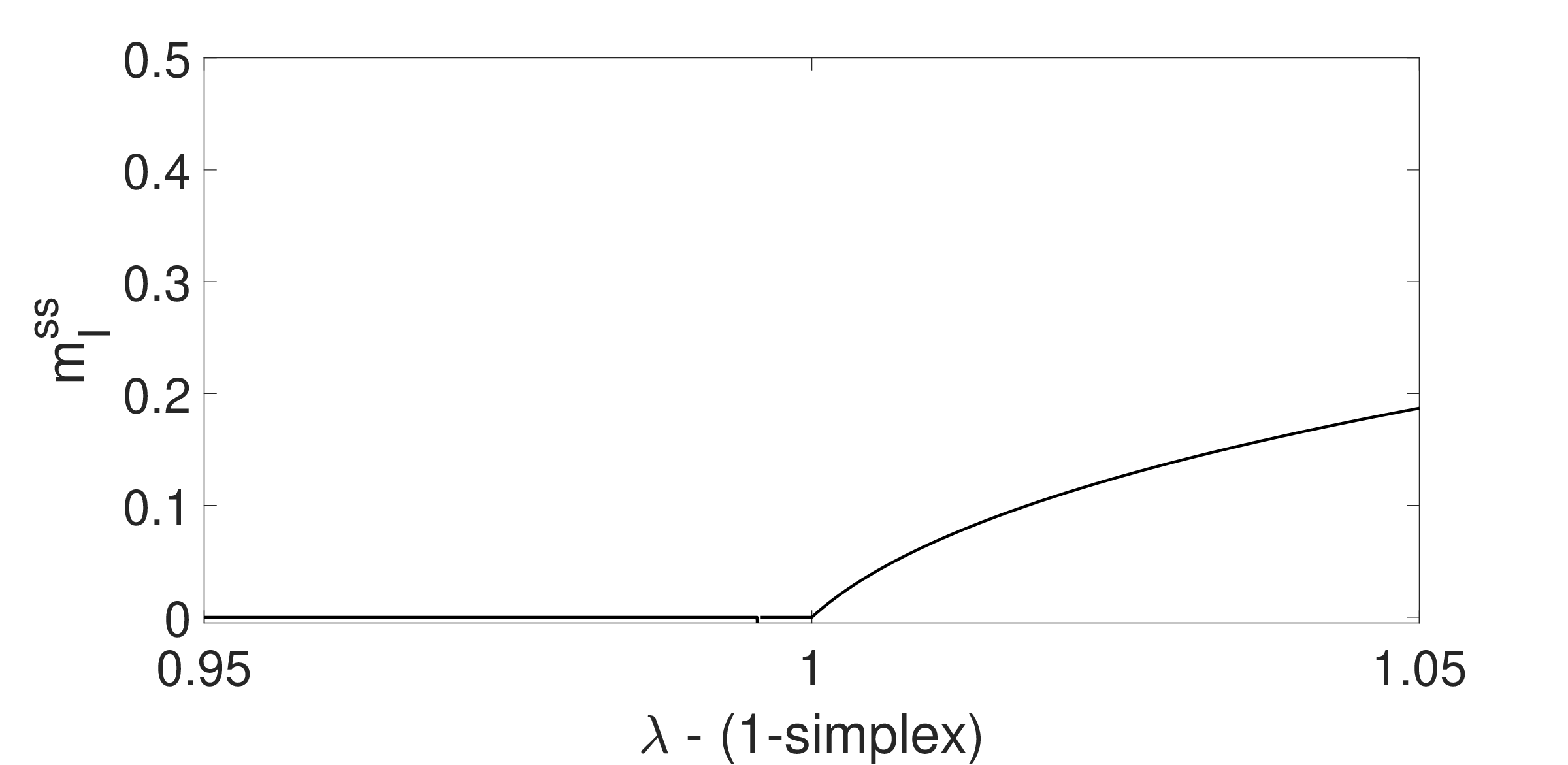}   
 \includegraphics[width=0.49\textwidth]{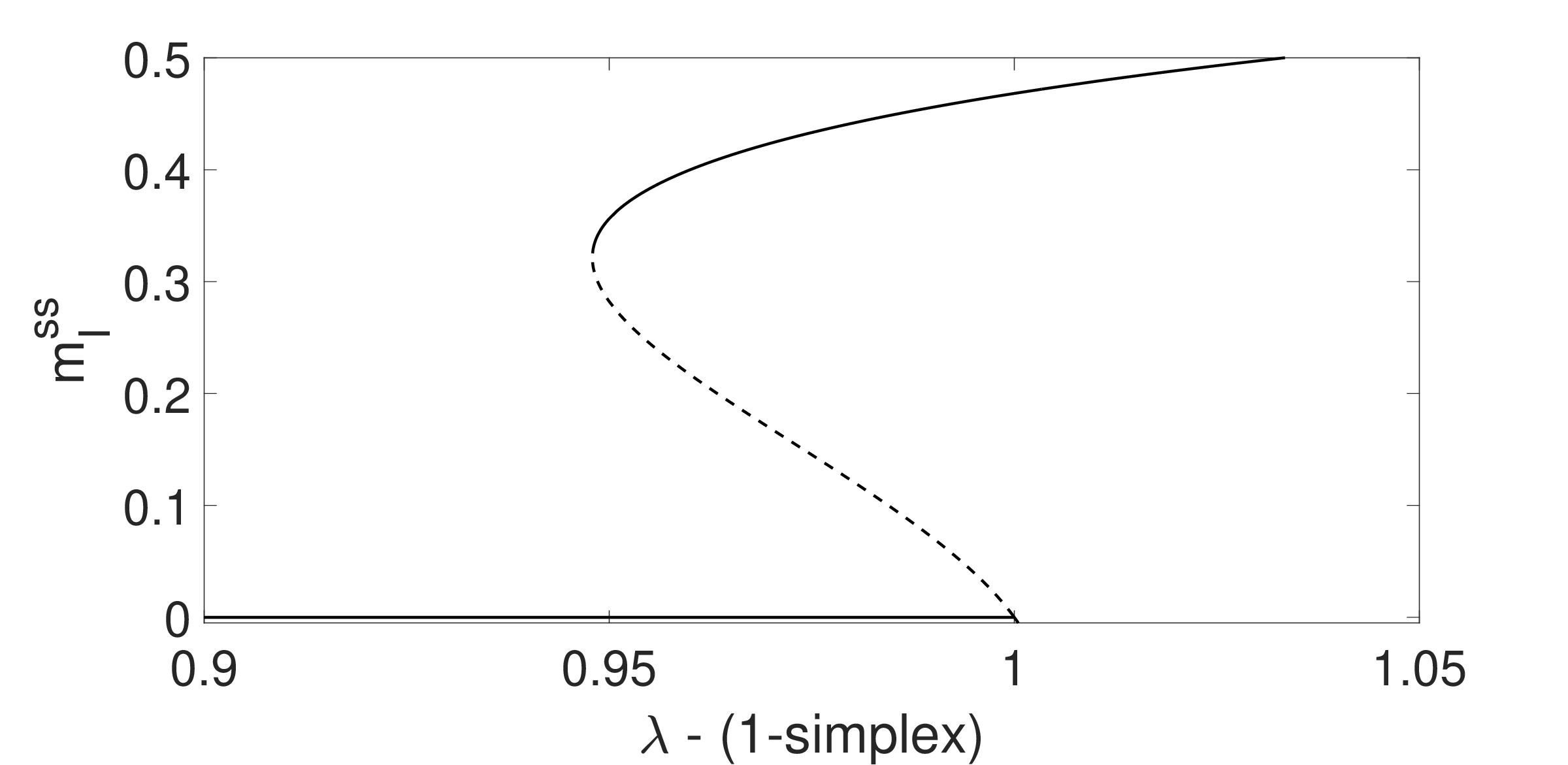} 
     \caption{Bifurcation picture in the case $\theta=2$ for $\mu=1.8$ (left) and $\mu=2.2$ (right). Increasing the value of $\mu$, for this particular value of $\theta$, moves the system from a simple transcritical bifurcation to fold bifurcation leading to bi-stability.}
        \label{fig:bif_theta_2_cross_section}
\end{figure}

It is worth to plot the steady states as $\lambda$ is varied for different fixed value of $\mu$. 
The inset in the leftmost panel of Figure~\ref{fig:bif_theta_2_10} shows that there are two regions of particular interest; namely $\mu=1.8$ and $\mu=2.2$. As shown in Figure~\ref{fig:bif_theta_2_cross_section}, the system goes from displaying a simple transcritical bifurcation ($\mu=1.8$) to a fold bifurcation leading to bi-stability ($\mu=2.2$). In Figure~\ref{fig:bif_theta_10_cross_section} we consider instead higher contributions coming from the highest-order simplices. Setting $\theta=10$ leads to a more complex scenario where the system can display the transcritical behaviour, fold bifurcation with the fold after the transcritical point, fold bifurcation with the fold before the transcritical point, and bi-stability between an endemic and the disease-free steady state. It may be useful to focus on the two fold bifurcations and in particular on the cross-sections shown in the inset of the rightmost panel of Figure~\ref{fig:bif_theta_2_10}. From here on, the number of solutions in a region includes the trivial disease-free solution. For $\mu=1.81$ (Figure~\ref{fig:bif_theta_2_10}, top-right panel), and with $\lambda$ moving from zero to positive values, it is evident that for small values of $\lambda$ the only steady state is the trivial disease-free steady state. However, as $\lambda$ increases the first transition point is into the area with two steady states (transcritical point), this is the narrow region between the $\lambda=1$ line and the left-hand side of the cusp. Increasing $\lambda$ further takes the system to the second transition point inside the cusp with four solutions and finally back to a region with two solutions. For $\mu=1.87$ (Figure~\ref{fig:bif_theta_2_10}, middle-left panel), as $\lambda$ increases from zero, we move from one solution to three (point of the fold) and then to four solution as we pass through the transcritical point. As $\lambda$ increases further, only the stable endemic and unstable trivial disease-free states survive. In this case, on entering the cusp from the left we have three solutions and this increase to four as the system is still within the cusp but passes through the $\lambda=\gamma=1$ boundary.

To further illustrate these findings, we plot the temporal evolution of the prevalence as given by the solutions of Equation~\eqref{eq:3-simplex-ODE} starting from different initial conditions, at the bottom panel of Figure~\ref{fig:bif_theta_10_cross_section}. The figure clearly shows that we have two stable non-zero steady states, as expected based on the top-right panel in the same figure. Hence, the long-term behaviour of the system strongly depends on the initial conditions.
The stability of the steady states can be obtained by using the fact that the largest steady state is stable. Hence, for example, in the case of 3 positive steady state, the largest and smallest positive steady states are stable, while the middle one and zero are unstable.

\begin{figure}[h!]
     \centering 
\includegraphics[width=0.49\textwidth]{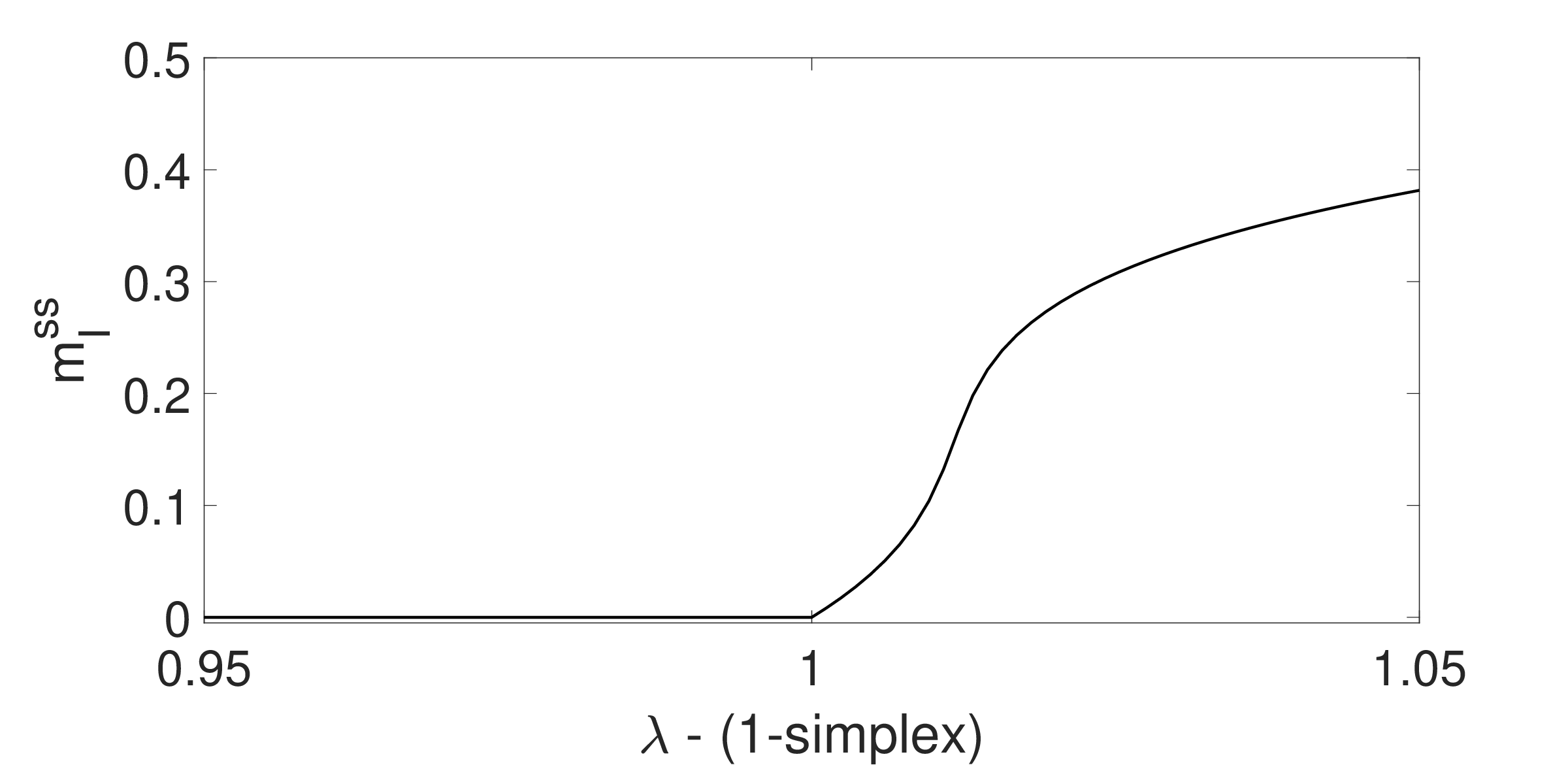}
\includegraphics[width=0.49\textwidth]{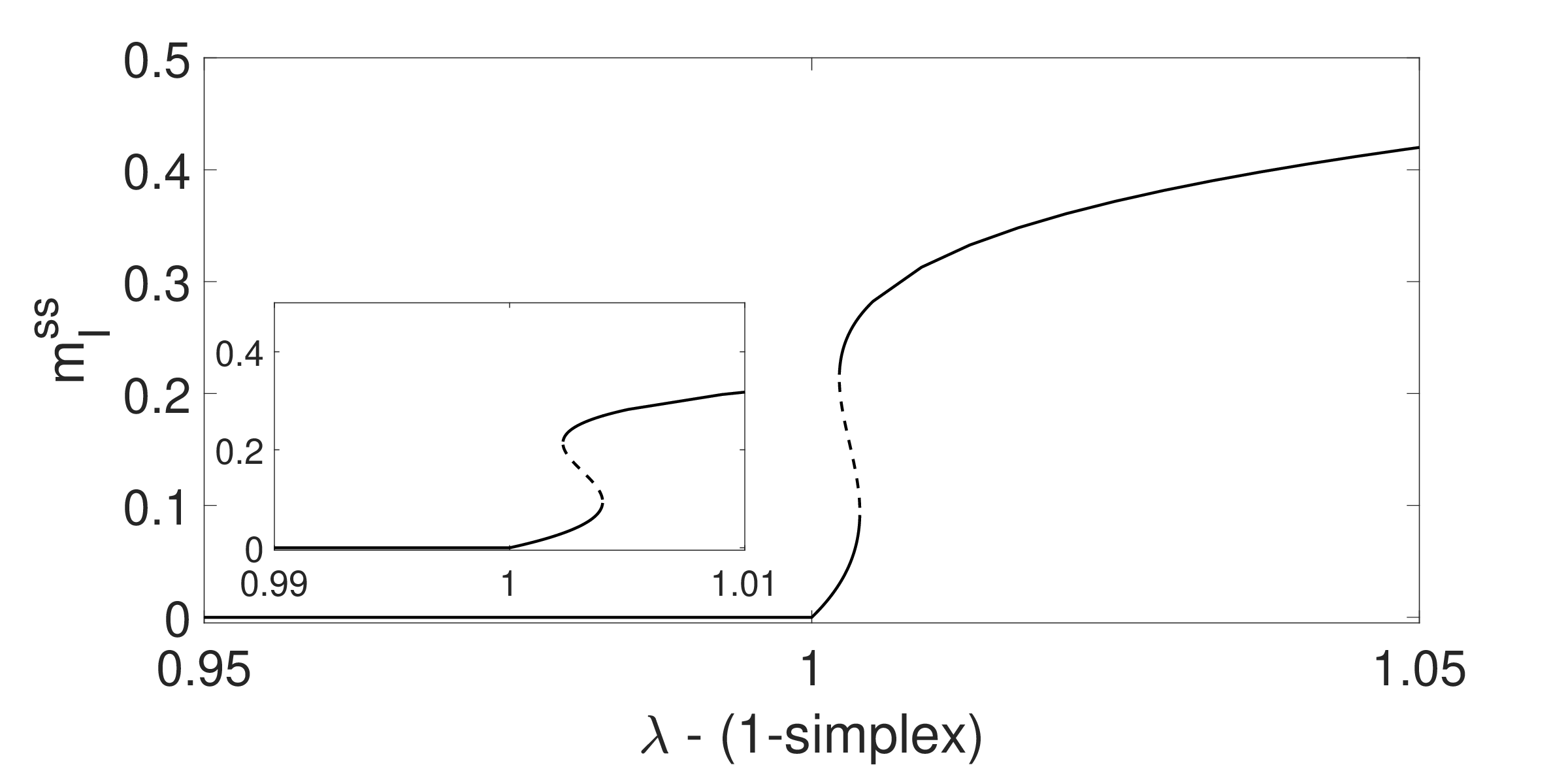} 
\includegraphics[width=0.49\textwidth]{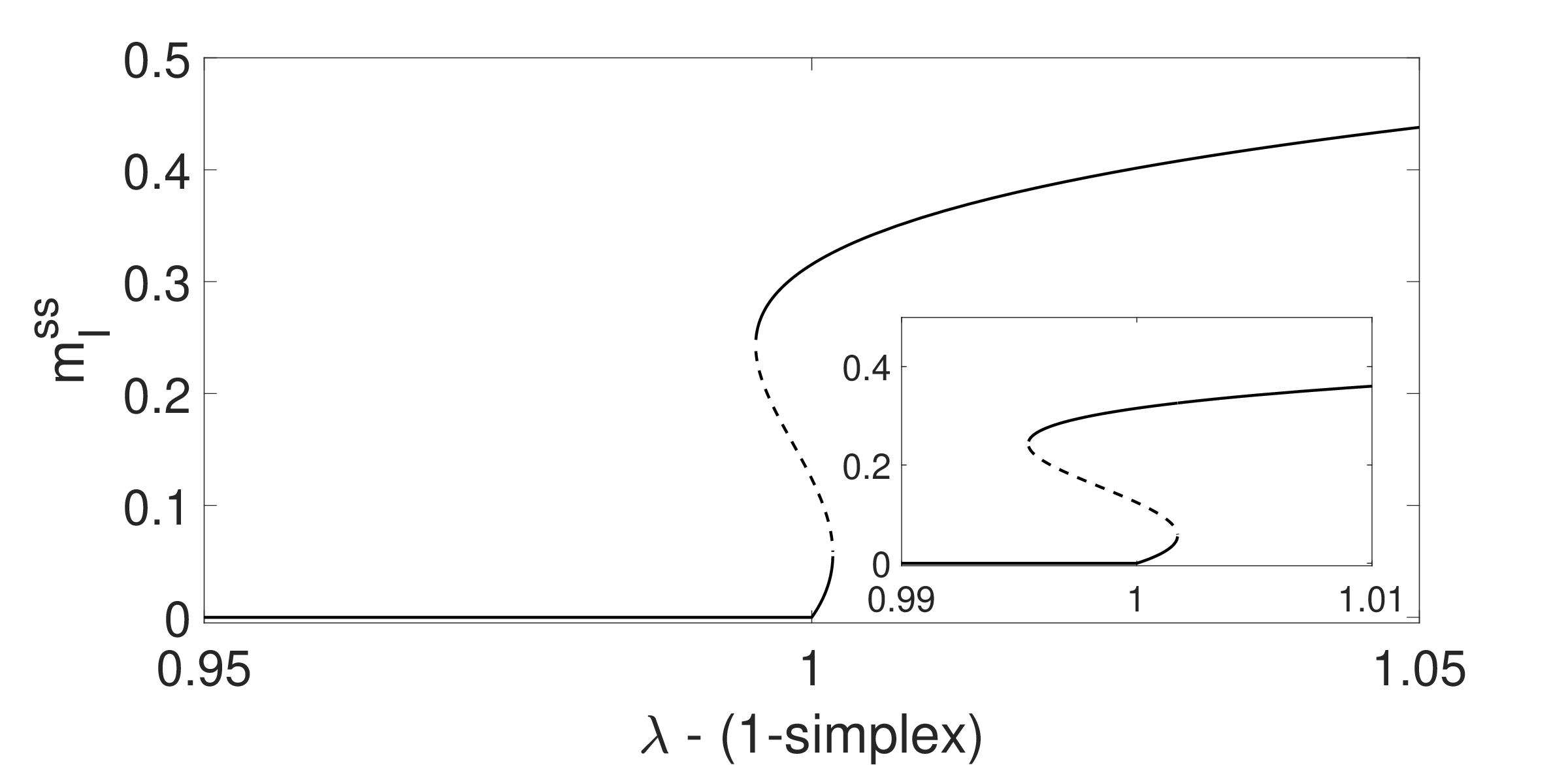}
\includegraphics[width=0.49\textwidth]{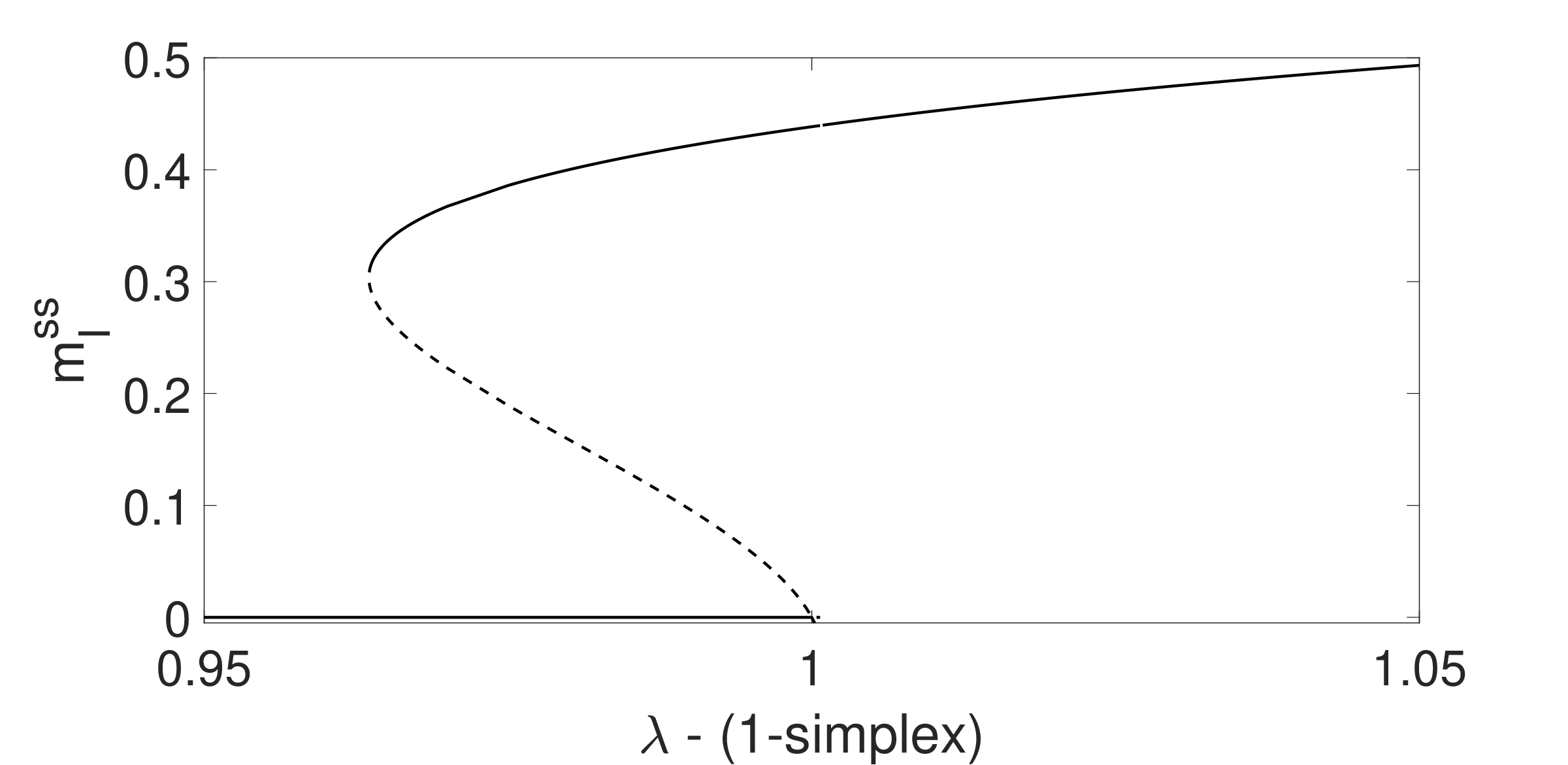}
\includegraphics[scale=0.25]{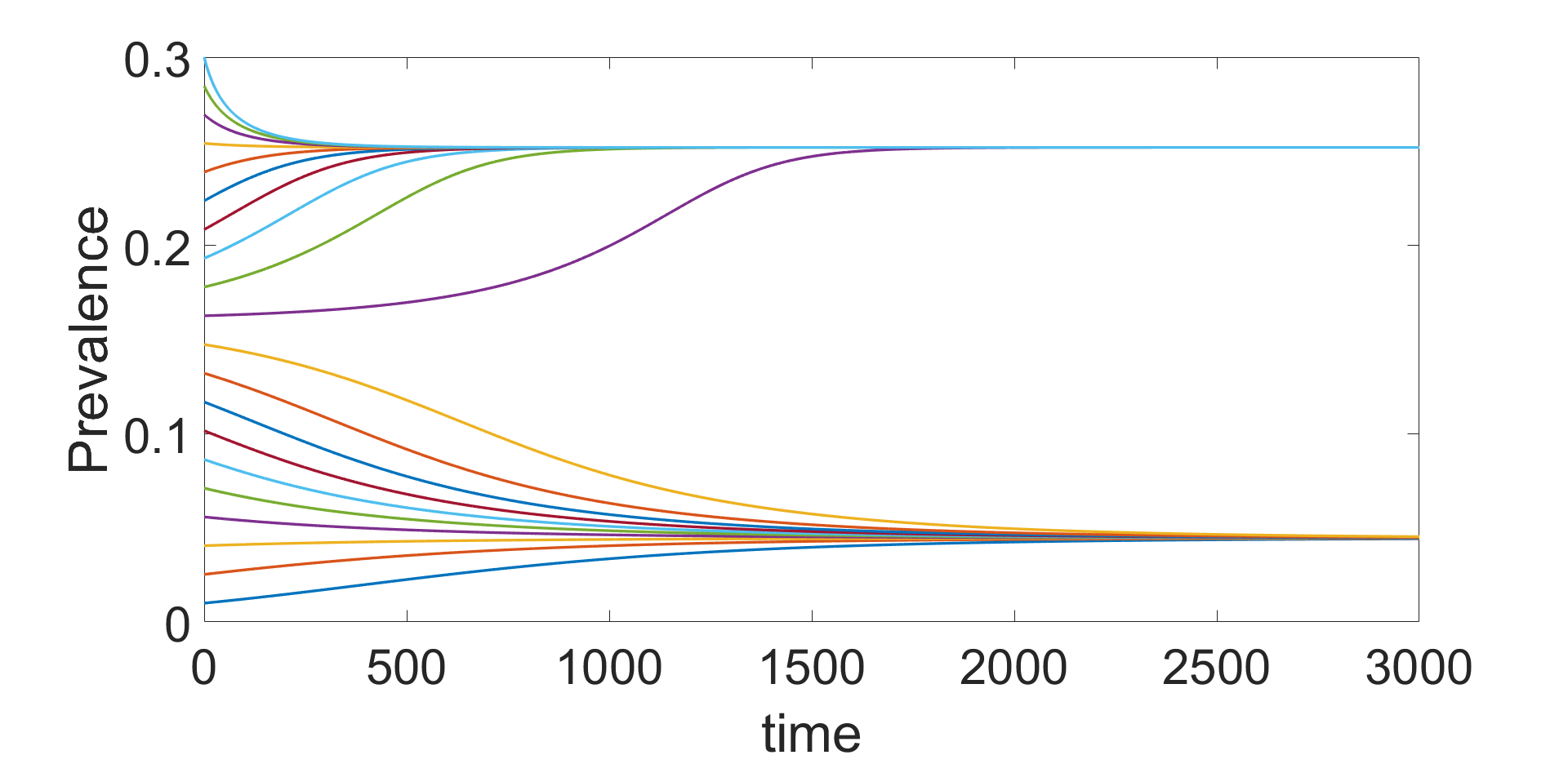}
    \caption{Bifurcation picture in the case $\theta=10$ for $\mu=1.7$ (top-left), $\mu=1.81$ (top-right), $\mu=1.87$ (middle-left) and $\mu=2.1$ (middle-right). Increasing the value of $\mu$, for this particular value of $\theta$, moves the system through the following four distinct bifurcation profiles: transcritical, fold bifurcation with fold after the transcritical point, fold bifurcation with fold before the transcritical point and bi-stability between an endemic and the disease-free steady state. The figure at the bottom shows solutions of the mean-field model, with up to 3-simplices. Each individual solution starts with a different initial state and it illustrates that the system has two stable steady states. Parameters are: $\gamma=1$, $\lambda=1.003$, $\mu=1.81$ and $\theta=10$. This plot corresponds to fixing $\lambda=1.003$ in the top-right plot.}
        \label{fig:bif_theta_10_cross_section}
\end{figure}

Finally, it is possible to obtain an asymptotic expansion to one of the solutions of equation~\eqref{eq:3-simplex-ODE} in the limit of small $\mu$ and $\theta$. This leads to 
\begin{equation}
 x\simeq 1-\frac{\gamma}{\lambda}+\frac{\gamma(\lambda-\gamma)}{2\lambda^3}\mu+\frac{\gamma(\lambda-\gamma)^2}{6\lambda^4}\theta.
\end{equation}
The expansion above shows again that the added infection due to complete simplicial 3-complex increase the value of the endemic steady sate. Under certain conditions we can again see that this is of order $O(
1/\lambda^2)$ which again highlights how high values of the rate of infection across pairs limits the effect of higher-order structures.

\section{Discussion}\label{sec:disc}
Understanding how behavioural contagion unfolds on a population of interacting individuals is sociologically interesting, but also crucial for biological spreading given its strict interplay with the behavioural component that can facilitate/inhibit the contagion process~\cite{perra2011towards, scarpino2016effect, perra2021non, lucas2023simplicially}. In this paper, we derived the exact equations for an $SIS$-like simplicial contagion dynamics on fully connected simplicial complexes where contagion events are mediated by $k$-simplices of arbitrary order that represent group interactions from 2-body to ($k+1$)-body. While the exact model can be written out explicitly and can be evaluated numerically, we also provided a rigorous mean-field limit in the form of a single differential equation for the expected number of infected nodes. We then performed a detailed bifurcation analysis for the case of a simplicial contagion that involves complete simplicial 2- and 3-complexes. In both cases, we found that the novel effects brought by higher-order interactions can be effectively interpreted as as perturbations to the base case of an SIS spreading only through pairs of nodes.
We analytically showed how the higher-order structure contributes to increasing the value of the endemic steady state. In particular, we found that this contribution is of $O(1/\lambda^2)$, where $\lambda$ is the rate of infection across a unique link. This clearly illustrates the vanishing contribution of simplicial contagion as the value of the pairwise infection rate increases.

Our work is the first one to provide an analytical treatment that is able to handle simplicial contagion that runs on simplices of order higher than 2. In fact, most of the recent studies in the emerging field of ``higher-order network science'' have been focusing on understanding the dynamical differences that emerge when the system descriptors move beyond pairs; in this view, triangles, encoding 3-body interactions, are the most natural starting point. In Ref.~\cite{iacopini2019simplicial}, the particular case of a simplicial contagion running exclusively on 1-simplices and simplices of another single higher order $k>1$ was analysed, leading to similar results as for the simplest case of 1- and 2-simplices. In this manuscript, we made an important stride in this direction by allowing for all interactions at multiple orders up to 4 nodes, and also providing a general equation for infections running on systems featuring all interactions up to an arbitrary fixed order.

We found that there is a natural relation between the order of the simplex and the degree of the polynomial appearing on the right-hand side of the mean-field limit. For pairs only, the mean-field equation is driven by a quadratic polynomial, for up to complete simplicial 2-complexes is cubic, for complete simplicial 3-complexes is quartic and for complete simplicial $M$-complexes the polynomial has degree ($M+1$). This naturally leads to the possibility of having multiple non-trivial (excluding the trivial disease-free) steady states, and it comes down to the number of positive solutions of polynomials. Indeed, as for previous results~\cite{iacopini2019simplicial, matamalas2020abrupt, de2020social, landry2020effect, barrat2022social}, when we consider a structure that allows for up to 2-simplices we find a bi-stability regime where a stable endemic and a stable disease-free states co-exist. Here, we showed that when allowing for 3-simplices as well it is possible to have four steady states, two of which are distinct strictly positive endemic steady states. Given these premises, we envision that going even higher with the order of the interactions could lead to having an arbitrary numbers of strictly positive and distinct endemic steady states.
Note that multistability was already found in previous threshold-based contagion models on higher-order networks that presented community structure. Here, instead, we show that even the most trivial structure, a fully connected one, can lead to this phenomenon if one simply allows for interaction of higher-order. It is well known that the outcome of an epidemic process is the result of the complex interplay between the structure of the underlying network and epidemic dynamics that unfolds on top of it. Since the models analysed in this manuscript are all based on the assumption of a fully connected structure, we conclude that in this case the richness of model behaviour is mainly driven by the formulation of the dynamics through higher-order mechanisms. 

We foresee different possible extensions to our work. The most natural continuation would be to investigate the possibility of making general statements about the global bifurcation picture of the mean-field model given by polynomials of arbitrary order, coupled with an extension of the results obtained here via perturbation theory.
Next, one could make crucial realistic steps in two directions.
On one hand, it would be interesting to rigorously explore the effects of simplicial contagion of arbitrary order on higher-order structures that are not complete. Preliminary calculations on \ER structures show that it is possible to derive a semi-rigorous Markov-chain at the population level (not to be confused with the microscopic-Markov chain approach~\cite{matamalas2020abrupt, burgio2021network, burgio2023triadic}) where infection rates involving 1- and 2-simplices are approximated based on probabilistic arguments.
On the other hand, it is well known that temporality of contacts can have a huge impact on the spreading of epidemics on complex networks~\cite{holme2012temporal}, and higher-order structures can obviously present (even more) interesting temporal patterns at all orders~\cite{cencetti2021temporal, gallo2023higher, iacopini2023temporal}.
The problem of understanding the impact of having group interactions that change in time on the dynamical processes that unfold on top has been addressed only by a few studies ---based on simulations~\cite{chowdhary2021simplicial, neuhauser2021consensus}. Is it possible to devise a formal analytical treatment for (even trivial) time-varying higher-order structures? Starting from the simplest approach possible, one could keep the structure as it is (fully connected), but instead investigate the impact of having group events that get activated at different points in time according to some activity distribution. 


\appendix

\section{}
\label{App:2-simp-proof}
In this section, we detail the proof for deriving the mean-field limits for the complete simplicial 2- and 3-complexes.
\subsection{Mean-field limit for complete simplicial 2-complex}
\begin{proof}[Proof of Theorem \ref{thm:hydroKn}.]
The expected number of infected $m^N(t) = \E(I^N_t) = \sum_{k=0}^{N}kp_k(t)$ at time $t$ satisfies the equations 

\begin{align}
\frac{d}{dt} m^N(t) & = \sum_{k= 0}^N k \frac{d}{dt} p_k(t) \notag \\ 
& = \sum_{k=0}^N k [a_{k-1}p_{k-1}(t) - (a_k + c_k)p_k(t) + c_{k+1}p_{k+1}(t)] +  \sum_{k=0}^N k \beta^\t_{k-1}p_{k-1}(t)   -  \sum_{k=0}^N k\beta^\t_k p_{k}(t). \label{eq:mean-in}
\end{align} 

Focus on the middle sum in \eqref{eq:mean-in}. By the initial conditions on $\beta_k^\t$ we have 
\begin{align*}
 \sum_{k=0}^N k \beta^\t_{k-1}p_{k-1}(t) &= \sum_{k=3}^{N} k \beta^\t_{k-1}p_{k-1}(t) \\
 &= \sum_{\ell = 2}^{N-1} (\ell+1) \beta^\t_{\ell}p_{\ell}(t) =  \sum_{\ell = 2}^{N-1} \ell \beta^\t_{\ell}p_{\ell}(t) +  \sum_{\ell = 2}^{N-1} \beta^\t_{\ell}p_{\ell}(t)\\
 &= \sum_{\ell = 0}^{N} \ell \beta^\t_{\ell}p_{\ell}(t) +  \sum_{\ell = 0}^{N} \beta^\t_{\ell}p_{\ell}(t), \quad \text{since $\beta_N^\t = 0$ as well.}
\end{align*}

Substitute back in \eqref{eq:mean-in} to obtain 
\be\label{eq:ready2hydro}
\frac{d}{dt} m^N(t) =  \sum_{k=0}^N k [a_{k-1}p_{k-1}(t) - (a_k + c_k)p_k(t) + c_{k+1}p_{k+1}(t)] + \beta \sum_{\ell = 2}^{N-1} (N -\ell) {\ell \choose 2} p_{\ell}(t).
\ee

We build on Eq.~\eqref{eq:ready2hydro} and write everything in terms of moments of $I^N_t$. We have 
\begin{align*}
\frac{d}{dt}m^N(t)&=  \sum_{k=0}^N k [a_{k-1}p_{k-1}(t) - (a_k + c_k)p_k(t) + c_{k+1}p_{k+1}(t)] + \beta \sum_{\ell = 2}^{N-1} (N -\ell) {\ell \choose 2} p_{\ell}(t)\\
&= \tau N \E(I^N_t)- \gamma \E(I^N_t) - \tau \E((I^N_t)^2) + \beta \Big( (N -I^N_t){I^N_t \choose 2} \Big)\\
&= \tau N \E(I^N_t)- \gamma \E(I^N_t) - \tau \E((I^N_t)^2) + \frac{\beta}{2} \big( N\E(I^N_t(I^N_t -1)) - \E((I^N_t-1)(I^N_t)^2)) \big)\\
&= \Big( \tau N -\gamma - \frac{N \beta}{2}\Big) \E(I^N_t) + \Big( -\tau + \frac{\beta(N+1)}{2} \Big) \E((I^N_t)^2) - \frac{\beta}{2}  \E((I^N_t)^3). 
\end{align*}
Now we scale everything by $N$ to obtain 
\[
\frac{d}{dt}\frac{m^N(t)}{N} = \Big( \tau N -\gamma - \frac{N \beta}{2}\Big) \E\Big(\frac{I^N_t}{N}\Big) + \Big( -\tau + \frac{\beta(N+1)}{2} \Big) N \E\Big(\Big (\frac{I^N_t}{N}\Big)^2\Big) - \frac{\beta}{2}  N^2 \E\Big(\Big (\frac{I^N_t}{N}\Big)^3\Big) . 
\]
At this point we substitute in the values $\tau = \lambda N^{-1},\beta = \mu N^{-2}$ to obtain 
\begin{align*}
\frac{d}{dt}\frac{m^N(t)}{N}&= \Big( \lambda -\gamma - \frac{\mu}{2N}\Big) \E\Big(\frac{I^N_t}{N}\Big) + \Big( -\lambda + \frac{\mu(N+1)}{2N} \Big) \E\Big(\Big (\frac{I^N_t}{N}\Big)^2\Big) - \frac{\mu}{2}\E\Big(\Big (\frac{I^N_t}{N}\Big)^3\Big)\\
&=  \Big( \lambda -\gamma \Big) \E\Big(\frac{I^N_t}{N}\Big) - \Big( \lambda - \frac{\mu}{2} \Big) \E\Big(\Big (\frac{I^N_t}{N}\Big)^2\Big) - \frac{\mu}{2}\E\Big(\Big (\frac{I^N_t}{N}\Big)^3\Big) - \frac{\mu}{2N}( \E\Big(\frac{I^N_t}{N}\Big)- \E\Big(\Big (\frac{I^N_t}{N}\Big)^2\Big)\Big)\\
&=  \Big( \lambda -\gamma \Big) \E\Big(\frac{I^N_t}{N}\Big) - \Big(\lambda - \frac{\mu}{2} \Big) \E\Big(\Big (\frac{I^N_t}{N}\Big)^2\Big) - \frac{\mu}{2}\E\Big(\Big (\frac{I^N_t}{N}\Big)^3\Big) + \mathcal O\Big( \frac{\mu}{N}\Big).
\end{align*}

Going forward, we will be keeping track of the error (which vanishes as $N \to \infty$) to make sure it will be independent of the time parameter $t$. We will need this fact at the end of the proof to interchange two limits. Above we used that $0 \le I^N_t \le 1$.  At this point we can substitute in \eqref{eq:rep1} and obtain 
\begin{align*}
\frac{d}{dt}\frac{m^N(t)}{N}&=\Big( \lambda -\gamma \Big) \E\Big(\frac{m^N(t)}{N} + \eta^N_t\Big) + \Big( -\lambda + \frac{\mu}{2} \Big) \E\Big(\Big (\frac{m^N(t)}{N} + \eta^N_t\Big)^2\Big) \\
&\phantom{xxxxxxxxxxxxxxxxxx}- \frac{\mu}{2}\E\Big(\Big (\frac{m^N(t)}{N} + \eta^N_t\Big)^3\Big) + \mathcal O\Big( \frac{\mu}{N}\Big)\\
&=\Big( \lambda -\gamma \Big) \Big(\frac{m^N(t)}{N} + \E\eta^N_t\Big) + \Big( -\lambda + \frac{\mu}{2} \Big)\Big( \Big(\frac{m^N(t)}{N}\Big)^2 + 2\frac{m^N(t)}{N} \E \eta^N_t + \E((\eta^N_t)^2)\Big) \\
&\phantom{xxxxxxx} - \frac{\mu}{2}\Big( \Big(\frac{m^N(t)}{N}\Big)^3 + 3\Big(\frac{m^N(t)}{N}\Big)^2 \E \eta^N_t++ 3\frac{m^N(t)}{N} \E(( \eta^N_t)^2) + \E((\eta^N_t)^3)\Big)  + \mathcal O\Big( \frac{\mu}{N}\Big).
\end{align*}

Since the $m^N(t)N^{-1}$ converge uniformly for $t \in [0,T]$, the error
\[
\varepsilon_{N,T} = \max_{p=1,2,3} \sup_{t \in [0,T]}\Big\| \Big(\frac{m^N(t)}{N}\Big)^p - (m_I(t))^p\Big\|_\infty
\]
converges to 0 as $N \to \infty$. We use this to write
\begin{align*}
\frac{d}{dt}\frac{m^N(t)}{N}&= ( \lambda -\gamma ) (m_I(t)+ \E\eta^N_t ) + \Big( -\lambda + \frac{\mu}{2} \Big)( m_I^2(t) + 2 m_I^2(t) \E \eta^N_t + \E((\eta^N_t)^2)) \\
&\phantom{xxxxxxx} - \frac{\mu}{2}( m_I^3(t) + 3 m_I^2(t)  \E \eta^N_t+ 3 m_I(t) \E(( \eta^N_t)^2) + \E((\eta^N_t)^3))  + \mathcal O\Big( \frac{\mu}{N} \vee \varepsilon_{N,T}\Big).
\end{align*}
For $N$ large enough, $\eta^N_t \le 1$. This is because $m^N(t) N^{-1} \le 1$ and it will be uniformly close to its limit. Therefore, condition \eqref{eq:ass0} implies 
\[
\delta_{N, T} = \sup_{t \le T} \E(\eta^N_t) \ge  \sup_{t \le T} \E((\eta^N_t)^p), \quad \text{ and } \delta_{N, T} \to 0. 
\]
Therefore, with a final substitution we have reached that for $N$ large enough
 \be\label{eq:finale}
 \frac{d}{dt}\frac{m^N(t)}{N}= ( \lambda -\gamma )m_I(t) + \Big( -\lambda + \frac{\mu}{2} \Big)m_I^2(t) - \frac{\mu}{2} m_I^3(t) + \mathcal O\Big( \frac{\mu}{N} \vee \varepsilon_{N,T}\vee \delta_{N, T}\Big).
 \ee
Note that as $N$ grows, the error term in \eqref{eq:finale} will vanish in the limit. 
Therefore, we want to argue that 
\be \label{eq:unif-con}
\lim_{N \to \infty} \frac{d}{dt} \frac{ m^N(t) }{N} = \frac{d}{dt}  \lim_{N \to \infty} \frac{ m^N(t) }{N} =  \frac{d}{dt}m_I(t).
\ee
From the assumptions we have the pointwise convergence of  $\frac{ m^N(t) }{N} $. From Eq.~\eqref{eq:finale} we have that the derivatives of $ \frac{m^N(t)}{N}$ converge uniformly in any interval $[0,T]$. Then, from \cite{rudin1953principles} Theorem 7.17 we can exchange limits and \eqref{eq:unif-con} is valid. 

The proof of Eq.~\eqref{eq:ODE1} now concludes by taking the limit as $N \to \infty$ in \eqref{eq:finale} to obtain 
\be \label{2simhydro}
 \frac{d}{dt}m_I(t) = ( \lambda -\gamma )m_I(t) + \Big(\frac{\mu}{2} - \lambda \Big)m_I^2(t) - 
\frac{\mu}{2} m_I^3(t).
\ee

The proof of \eqref{eq:lim0} follows from \eqref{eq:rep1}, \eqref{eq:unipo} and the weak convergence of $\{\eta_t\}_{t\le T}$ to 0, guaranteed by \eqref{eq:ass0}. 
\end{proof}

\subsection{Mean-field limit for complete simplicial 3-complex}
\begin{proof}[Proof of Theorem~\ref{thm:3-simplexHydro}]
    The ODE for the un-scaled first moment is 
\begin{align*} 
\frac{d}{dt} m^N(t) &= \sum_{k=0}^N k \frac{d}{dt} p_k(t)\\
&= \sum_{k=0}^N k \left( a_{k-1}p_{k-1}(t) +c_{k+1}p_{k+1}(t) - (a_k + c_k)p_k \right)
+ \beta \sum_{\ell=2}^{N-1}(N-\ell){\ell \choose 2}p_{\ell}(t)\\
& \phantom{xxxxxxxxxxxxx} + \sum_{k=0}^N k \left( \delta^{\square}_{k-1}p_{k-1}(t) - \delta_k^{\square} p_k(t)\right).
\end{align*}
Note that the first line of that equality is actually treated and scaled by the argument in Theorem \ref{thm:hydroKn}, so we can control under the same assumptions its limit. 
So we focus on the sum on the second line. We compute, taking into accounts the boundaries,  
\begin{align*}
  \sum_{k=0}^N k \left( \delta^{\square}_{k-1}p_{k-1}(t) - \delta_k^{\square} p_k(t)\right) & = \sum_{k=4}^N k  \delta^{\square}_{k-1}p_{k-1}(t) - \sum_{k=3}^{N-1}k \delta_k^{\square} p_k(t)\\
  &= \sum_{k=3}^{N-1} (k+1)  \delta^{\square}_{k}p_{k}(t) - \sum_{k=3}^{N-1}k \delta_k^{\square} p_k(t)\\
  &= \sum_{k=3}^{N-1} \delta^{\square}_{k}p_{k}(t) = \sum_{k=3}^{N-1} \delta {k \choose 3}(N-k)p_{k}(t)\\
  &= \delta \E \left( \frac{I^N_t(I^N_t -1)(I^N_t-2)}{6} (N- I^N_t)\right).
\end{align*}

In order to find the appropriate scaling for $\delta$ we need to expand the expression in the expectation and it yields 
\[
 \delta \E \left( \frac{I^N_t(I^N_t -1)(I^N_t-2)}{6} (N- I^N_t)\right) = \frac{\delta}{6}\E(-(I^N_t)^4+(N+3)(I^N_t)^3+(-3N-2)(I^N_t)^2+2NI^N_t)
\]
So we will scale $\delta = \theta N^{-3}$. Then we have 
\[
\frac{\theta}{6}\E\left(-N \frac{(I^N_t)^4}{N^4}+(N+3)\frac{(I^N_t)^3}{N^3}+\frac{(-3N-2)}{N}\frac{(I^N_t)^2}{N^2}+2\frac{I^N_t}{N^2}\right)
\]

To put everything together, recall that as before, 
\[ \tau = \lambda N^{-1}, \quad \beta = \mu N^{-2}, \quad \delta = \theta N^{-3} \quad I^N_t N^{-1} \sim m_I(t) + \eta_N(t), \quad \eta_N(t) \to 0 \]
and divide the differential equation for $m_N(t)$ by $N$ and take a limit as $N\to \infty$ to obtain
\begin{align}  \label{ode3simplex}
\frac{d}{dt}m_I(t) &= \lim_{N\to \infty} \frac{d}{dt}\frac{m_N(t)}{N} = (\lambda - \gamma) m_I(t) + \Big( \frac{\mu}{2} - \lambda\Big) m_I(t)^2 -\frac{\mu}{2} m_I(t)^3 \notag \\
&\phantom{xxxxx}+ \lim_{N\to \infty} 
\frac{1}{N}\frac{\theta}{6}\E\left(-N \frac{(I^N_t)^4}{N^4}+(N+3)\frac{(I^N_t)^3}{N^3}+\frac{(-3N-2)}{N}\frac{(I^N_t)^2}{N^2}+2\frac{I^N_t}{N^2}\right) \notag \\
&= (\lambda - \gamma) m_I(t) + \Big( \frac{\mu}{2} - \lambda\Big) m_I(t)^2 +\left(\frac{\theta}{6}-\frac{\mu}{2}\right) m_I(t)^3 - \frac{\theta}{6} (m_{I}(t))^4.
\end{align}

\end{proof}

\bibliography{project1.bib}

\subsection*{Authors' contributions}
I.Z.K.: conceptualisation, formal analysis, investigation, methodology, software, validation, visualisation, writing---original draft, writing---review and editing; 
I.I.: formal analysis, investigation, visualisation, writing---original draft, writing---review and editing;
P.L.S.: formal analysis, methodology, writing---review and editing;
N.G.: formal analysis, methodology, writing---review and editing.
All authors gave final approval for publication and agreed to be held accountable for the work performed therein.

\subsection*{Competing interests}
We declare that we have no competing interests.

\subsection*{Funding}
N. Georgiou was partially supported by the Dr. Perry James Browne Research Centre, at the University of Sussex. P.L. Simon acknowledges support from the Hungarian Scientific Research Fund, OTKA (grant no. 135241) and from the Ministry of Innovation and Technology NRDI Office within the framework of the Artificial Intelligence National Laboratory Programme.

\end{document}